\DeclareMathOperator{\Tr}{Tr}
\DeclareMathOperator{\rank}{Rank}
\DeclareMathOperator{\newvec}{vec}
\DeclareMathOperator{\diag}{diag}
\begin{document}
	\newtheorem{theorem}{\bf~~Theorem}
	\newtheorem{remark}{\bf~~Remark}
	\newtheorem{property}{\bf~~Property}
	\newtheorem{observation}{\bf~~Observation}
	\newtheorem{definition}{\bf~~Definition}
	\newtheorem{lemma}{\bf~~Lemma}
	\newtheorem{preliminary}{\bf~~Preliminary}
	\newtheorem{proposition}{\bf~~Proposition}
	\newtheorem{comment}{\bf~~Comment}
	\renewcommand\arraystretch{0.9}
	\title{\Huge{Holographic Beamforming for Integrated Sensing and Communication with Mutual Coupling Effects}}
	\author{\normalsize \IEEEauthorblockN{
			{Shuhao Zeng}, \IEEEmembership{\normalsize  Member, IEEE},
			{Haobo Zhang}, \IEEEmembership{\normalsize Member, IEEE},
			{Boya Di}, \IEEEmembership{\normalsize Member, IEEE},
			{Hongliang Zhang}, \IEEEmembership{\normalsize Member, IEEE},\\
			{Zijian Shao}, \IEEEmembership{\normalsize Member, IEEE},
			{Zhu Han}, \IEEEmembership{\normalsize Fellow, IEEE},
			{H. Vincent Poor}, \IEEEmembership{\normalsize Life Fellow, IEEE}, and
			{Lingyang Song}, \IEEEmembership{\normalsize Fellow, IEEE}}
		\thanks{Shuhao Zeng is with Department of Electrical and Computer Engineering, Princeton University, NJ, USA, and also with School of Electronic and Computer Engineering, Peking University Shenzhen Graduate School, Shenzhen, China (email: shuhao.zeng96@gmail.com).}		
		\thanks{Haobo Zhang is with School of Electronic and Computer Engineering, Peking University Shenzhen Graduate School, Shenzhen, China, and also with Department of Engineering, University of Cambridge, Cambridge, UK (email: haobo.zh97@gmail.com)}
		\thanks{Boya Di, and Hongliang Zhang are with School of Electronics, Peking University, Beijing 100871, China (email: boya.di@pku.edu.cn; hongliang.zhang@pku.edu.cn; lingyang.song@pku.edu.cn).}
		\thanks{Zijian Shao, and H. Vincent Poor are with Department of Electrical and Computer Engineering, Princeton University, NJ, USA (email: shuhao.zeng96@gmail.com; zijianshao@princeton.edu; poor@princeton.edu).}
		\thanks{Zhu Han is with Electrical and Computer Engineering Department, University of Houston, Houston, TX, USA, and also with the Department of Computer Science and Engineering, Kyung Hee University, Seoul, South Korea (email: zhan2@uh.edu).}
		\thanks{Lingyang Song is with School of Electronic and Computer Engineering, Peking University Shenzhen Graduate School, Shenzhen, China, and also with School of Electronics, Peking University, Beijing 100871, China, and also with Hunan Institute of Advanced Sensing and Information Technology, Xiangtan University, Xiangtan, China (email: lingyang.song@pku.edu.cn).}		
	}
	
	\maketitle
\begin{abstract}
	Integrated sensing and communication~(ISAC) is envisioned as a key technology in 6G networks, owing to its potential for high spectral and cost efficiency. As a promising solution for extremely large-scale arrays, reconfigurable holographic surfaces~(RHS) can be integrated with ISAC to form the holographic ISAC paradigm, where enlarged radiation apertures of RHS can achieve significant beamforming gains, thereby improving both communication and sensing performance. In this paper, we investigate holographic beamforming designs for ISAC systems. However, unlike existing holographic beamforming schemes developed for RHS-aided communications, such designs require explicit consideration of mutual coupling effects within RHS. This is because, unlike prior works only considering communication performance, ISAC systems incorporate sensing functionality, which is sensitive to sidelobe levels. Ignoring mutual coupling in holographic beamforming can lead to notable undesired sidelobes, thus degrading sensing performance. The consideration of mutual coupling introduces new challenges, i.e., it induces non-linearity in beamforming problems, rendering them inherently non-convex. To address this issue, we propose a tractable electromagnetic-compliant holographic ISAC model that characterizes mutual coupling in a closed form using coupled dipole approximations. We then develop an efficient holographic beamforming algorithm to suppress sidelobes and enhance ISAC performance. Numerical results validate effectiveness of the proposed algorithm.
\end{abstract}

	\begin{IEEEkeywords}
		Holographic ISAC, holographic beamforming, mutual coupling effects, coupled dipole approximations
	\end{IEEEkeywords}

\section{Introduction}
	
Integrated sensing and communications~(ISAC) is expected to play a key role in the future sixth generation~(6G) networks~\cite{Zhang_overview_2021}. Unlike conventional separate communication and radar systems operating in isolated frequency bands~\cite{Moghaddasi_Multifunctional_2016}, communication and sensing functionalities share the same radio resources and hardware infrastructure in ISAC systems~\cite{Hassanien_dual_2019,Liu_Survey_2022}. Therefore, a significant enhancement of spectral, energy, and cost efficiency can be achieved by ISAC systems as compared to conventional approaches~\cite{Cui_integrating_2021,Feng_Joint_2020}.

To further improve communication and sensing performance, recent studies have explored the application of reconfigurable holographic surfaces~(RHS) to ISAC systems, giving rise to a new paradigm known as holographic ISAC~\cite{Liu_HISAC}. Specifically, an RHS is a representative type of metasurface, consisting of densely packed sub-wavelength metamaterial elements embedded in a waveguide structure excited by multiple feeds~\cite{Deng_VTM_2023,Zeng_dual_2024}. Electromagnetic~(EM) waves generated by the feeds propagate along the waveguide and sequentially excite the RHS elements. Each RHS element controls the amplitude of the radiated wave by adjusting the bias voltages applied to its onboard diode, thereby forming directional beams. This beamforming technique is referred to as holographic beamforming. Unlike conventional phased arrays that rely on costly hardware, RHS enables low-cost and energy-efficient beamforming based on simple diodes~\cite{Zhang_Target_2024,Pizzo_HMIMO_2020}. This makes RHS particularly suitable for large-scale array implementations. The resulting large radiation aperture of the RHS enables high directive gain~\cite{Ji_EM_2024,Zeng_RRS_2024}, thereby enhancing both communication throughput and sensing accuracy in ISAC systems.

Initial works have demonstrated the effectiveness and practicality of the holographic ISAC paradigm~\cite{Zhao_performance_2025,Gavras_duplex_2023,Hu_multi_band_2023,Zhang_HISAC_2022}. In~\cite{Zhao_performance_2025}, the authors investigate a holographic ISAC system, where closed-form expressions are derived for the sensing rate, communication rate, and outage probability under both instantaneous and statistical channel state information. Numerical results demonstrate that the holographic ISAC system outperforms conventional MIMO-based ISAC systems in terms of both sensing and communication performance. Then, the authors in~\cite{Gavras_duplex_2023} and~\cite{Hu_multi_band_2023} further extend the scenario to full-duplex and wideband holographic ISAC systems, respectively, and demonstrate the capability of the holographic ISAC systems to combat in-band interference in full-duplex operation and the high frequency selectivity of metamaterial elements in wideband systems. The practicality of such methodology is further experimentally verified in~\cite{Zhang_HISAC_2022}, where an RHS-based holographic ISAC hardware prototype is built and its positioning range and communication rate are measured.

In this paper, we aim to investigate the holographic beamforming design for ISAC systems. However, different from existing holographic beamforming schemes for RHS-aided communication systems, the holographic beamforming design for ISAC system should take the mutual coupling effect within the RHS into account, due to the integration of sensing functionality into the communication network. Specifically, within the RHS, the electromagnetic~(EM) fields radiated by one RHS element couple to adjacent elements, inducing additional currents within them. These time-varying currents, in turn, generate scattered fields that impinge upon the original element and excite new currents. Such phenomenon is known as \emph{mutual coupling}. The ignorance of the mutual coupling effects in holographic beamforming can bring notable undesired sidelobes, as demonstrated in Section~\ref{subsection_impact}. Existing works on holographic beamforming for RHS-aided communication networks aim to improve data rate, which is mainly determined by the main lobe gain rather than the sidelobe level\footnote{Although the sidelobe of the RHS deployed at the BS can lead to inter-user interference in multi-user scenarios, such interference can be effectively mitigated by directing the sidelobe away from the other users through digital beamforming.}, and thus they can neglect the mutual coupling effect for simplicity without compromising the communication performance. Differently, the holographic ISAC system integrates additional sensing functions. Note that power leakage from sidelobes can propagate to various environmental clutterers, such as trees and buildings, where the resulting echo signals can cause interference to sensing procedures. Therefore, the radar sensing performance of holographic ISAC systems is sensitive to sidelobe levels\footnote{Due to the potentially large number and wide spatial distribution of environmental clutterers, it is challenging to align the nulls of the RHS-generated beam with all clutterer directions through digital beamforming. Therefore, it becomes essential to suppress sidelobe levels by explicitly accounting for mutual coupling effects in the design of holographic beamforming.}, which thus necessitates the consideration of mutual coupling effects in holographic beamforming designs.


However, the consideration of the mutual coupling effects in the holographic beamforming for ISAC systems introduces the following new challenges:
\begin{itemize}
	\item First, the development of a tractable mutual coupling aware holographic ISAC model is non-trivial, since it is difficult to characterize the mutual coupling effect within the RHS in closed form.
	\item Second, mutual coupling effects induce non-linearity in the holographic beamforming problem\footnote{As we can see from (\ref{Holographic_BF}) and (\ref{tx_signal_RHS}), when the mutual coupling effect within the RHS is considered, the transmitted communication and sensing signals of the RHS are non-linear with respect to holographic patterns at the RHS, which is different from existing holographic ISAC systems aided by an ideal RHS without the mutual coupling effect~\cite{Deng_RHS_multi_user_2022}.}, leading to inherent non-convexity and making it challenging to solve.
\end{itemize}
To tackle these challenges, we consider a holographic ISAC system where a BS employs an RHS to realize holographic beamforming, enabling communication with multiple users while simultaneously sensing targets in various directions of interest. To facilitate holographic beamforming design, we propose an EM-compliant mutual coupling aware RHS-aided holographic ISAC model using coupled dipole approximations. Based on this model, we formulate a holographic beamforming problem to suppress sidelobes. An efficient mutual coupling aware algorithm is proposed to solve this non-convex problem. Simulation results demonstrate the effectiveness of the proposed algorithm. The main contributions of this paper are summarized below. 
\begin{itemize}
	\item We consider a holographic ISAC system, where a BS equipped with an RHS serves multiple communication users and detects multiple radar targets. Based on coupled dipole approximations, a tractable EM-compliant RHS-aided ISAC model is proposed, which explicitly accounts for mutual coupling effects in closed form. We demonstrate the generalizability and effectiveness of the proposed model. Building on this model, we further explain the mechanism through which mutual coupling effects degrade sensing performance of ISAC systems.

	\item Based on the model, we formulate a holographic beamforming problem to improve ISAC performance by suppressing sidelobes. The problem is decomposed into two subproblems, which are solved iteratively using a joint mutual coupling aware optimization algorithm. Specifically, the digital beamforming subproblem is reformulated as a more tractable quadratic constrained quadratic program~(QCQP) via fractional programming. For the holographic beamforming subproblem, the Neumann series is employed to convert it into a sequence of QCQP formulations.
	
	\item Simulation results demonstrate the efficacy of the proposed algorithm. The effects of transmit power, the number of users, and the number of sensing targets are also analyzed. Furthermore, the performance of an ISAC system assisted by an RHS, while considering the impact of mutual coupling, is compared with that of a conventional phased array under the same hardware cost.
\end{itemize}

The rest of the paper is organized as follows. In Section~\ref{sec_mutual_coupling}, we characterize the mutual coupling effect within the RHS using coupled dipole approximations. Building on this, we introduce a mutual coupling aware RHS-aided ISAC model in Section~\ref{sec_system_model}. The impact of mutual coupling on holographic beamforming is then analyzed in Section~\ref{subsection_impact}. Section~\ref{sec_problem_formulation} formulates a max-min rate optimization problem, which is decomposed into two subproblems. To address these subproblems, we propose a mutual coupling aware beamforming algorithm in Section~\ref{sec_optimization_algorithm}. Simulation results are presented in Section~\ref{sec_simulation}, followed by concluding remarks in Section~\ref{sec_conclusion}. 
	
	\section{Characterizing Mutual Coupling Via Coupled Dipole Approximations}
	\label{sec_mutual_coupling}
	\begin{figure}[!t]
		\centering
		\includegraphics[width=0.47\textwidth]{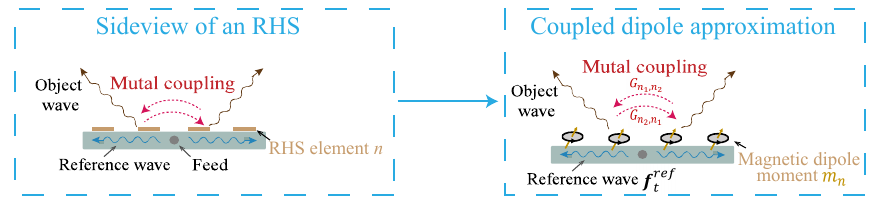}
		\caption{An illustration of coupled dipole approximations for capturing the mutual coupling effects within an RHS array.}
				\vspace{-3mm}
		\label{illustration_CDM}
	\end{figure}
	
%
	\label{dipole_approximation}
	{As shown in Fig.~\ref{illustration_CDM}, the core idea is to represent each RHS element as a magnetic dipole given that RHS elements are electrically small~\cite{Mancera_Polarizability_2017,Yue_2024}}. Further, the equivalent dipoles are assumed to interact with each other through coupled fields, so as to form a coupled dipole system~\cite{Mancera_analytical}. Since it is easier to analyze the influence of mutual coupling on the overall radiation characteristics for a coupled dipole system, such an approach provides a more convenient and mathematically tractable means to characterize mutual coupling effects in the RHS.

	
	
	Specifically, let $N$ denote the number of RHS elements, and thus the number of equivalent magnetic dipoles. As established in~\cite{Yoo_Sub_2023}, such a coupled dipole system is governed by the following matrix equation, i.e.,
	\begin{align}
		\label{coupled_dipole_model}
		\left((e^{j\tau}\bm{\Theta})^{-1}-\bm{G}\right)\bm{m}=\bm{f}^{ref}_t,
	\end{align}
	the physical meanings of $(e^{j\tau}\bm{\Theta})$, $\bm{G}$, $\bm{m}$, and $\bm{f}^{ref}_t$, as well as their connections with holographic beamforming are introduced as follows. The detailed derivation process of (\ref{coupled_dipole_model}) can be found in~\cite{Mancera_analytical}.

	\begin{itemize}
	\item Reference wave distribution $\bm{f}^{ref}_t=[f_{n,t}^{ref}]_n\in \mathbb{C}^{N\times 1}$: $\bm{f}^{ref}_t$  represents the distribution of reference waves generated by the $t$-th feed across the $N$ equivalent dipoles. Specifically, the EM wave generated by the $t$-th feed propagates along the waveguide and sequentially excites the equivalent dipoles, which is referred to as reference wave. 
	
	\item Magnetic dipole moments $\bm{m}=[m_n]_n\in \mathbb{C}^{N\times 1}$: $\bm{m}$ contains the magnetic moments of all equivalent magnetic dipoles, which determine the \emph{object wave} generated by the RHS. Specifically, as a key parameter of the equivalent magnetic dipoles, the magnetic moment $m_n$ governs the EM wave radiated by the $n$-th magnetic dipole (and consequently that by the $n$-th RHS element) and exhibits a linear relationship with it. Further, the EM waves generated by different RHS elements propagate into free space and superimpose to form the {object wave}.
	
	\item Polarizability matrix $e^{j\tau}\bm{\Theta}=\diag\{\theta_1e^{j\tau},\dots,\theta_Ne^{j\tau}\}\in \mathbb{C}^{N\times N}$: The polarizability $\theta_ne^{j\tau}$ of the $n$-th equivalent magnetic dipole is defined as the sensitivity of the magnetic moment $m_n$ to the local field $H_n$, given by
	\begin{align}
		\label{def_polarizability}
		\theta_ne^{j\tau}=m_n/H_n,
	\end{align}
	where $\theta_n$ and $\tau$ represent the magnitude and phase of polarizability of the $n$-th magnetic dipole, respectively. The magnitudes of polarizability for the equivalent dipoles, denoted as $\bm{\Theta}=\{\theta_1,\dots,\theta_N\}\in \mathbb{R}^{N\times N}$, can be interpreted as the \emph{holographic pattern} of the RHS in the context of mutual coupling. This is because the magnitude $\theta_n$ of polarizability is reconfigurable and depends on the biased voltages applied to the $n$-th RHS element~\cite{Boyarsky_Ele_2021}. Further, according to (\ref{coupled_dipole_model}), $\bm{\Theta}$ influences the transmitted signals of the RHS by modulating the magnetic moments $\bm{m}$. Such interpretation is further justified in Remark~\ref{remark_eff}. 
	
	\item Coupling matrix $\bm{G}\in\mathbb{C}^{N\times N}$: $\bm{G}$ records the coupling coefficients among different RHS elements, where its off-diagonal entries contain the Green's function $G_{n',n}$ while the diagonal elements are all zero. Specifically, the EM waves radiated by the $n$-th equivalent dipole propagate to another dipole $n'$, generating coupled fields $H_{n\rightarrow n'}^{MC}$. These coupled fields, along with the reference wave, excite the $n'$-th dipole. Here, the coupled field $H_{n\rightarrow n'}^{MC}$ is influenced by both the propagation environment and the source dipole $n$, with the effect of the propagation environment captured by the Green's function $G_{n,n'}$. {As the RHS element spacing increases, the amplitudes of $G_{n',n}$ ($n\neq n'$) tend to decrease. This is because the EM wave generated by the $n$-th RHS element can propagate to the $n'$-th RHS element through both free space outside of the RHS and the waveguide within the RHS~\cite{Mancera_analytical}. Note that the amplitude of the EM wave tend to decrease for both propagations through the free space and within the waveguide~\cite{Balanis_antenna_the}. Therefore, $|G_{n',n}|$ tends to decrease as the RHS element spacing increases.}
\end{itemize} 

By mapping the coupled dipole model in (\ref{coupled_dipole_model}) onto the conventional RHS-based holographic beamforming model~\cite{Deng_RHS_multi_user_2022}, we can reinterpret the holographic beamforming process within the framework of coupled dipole approximations, as summarized in the following remark.
	
	\begin{remark}
		\label{interpretation}
		(Reinterpretations of holographic beamforming in the context of coupled dipole approximations) The $t$-th feed generates reference wave $\bm{f}^{ref}_t$, which acts as the initial excitation for all magnetic dipoles. By taking into account both the holographic pattern $\bm{\Theta}$ and mutual coupling characterized by $\bm{G}$, we can obtain the magnetic moments $\bm{m}$ of all equivalent magnetic dipoles. The magnetic moments $\bm{m}$ determine the radiated fields of the overall dipole system, where the generated waves of different magnetic dipoles propagate into free space and superimpose to form the overall object wave. 
	\end{remark}
	
	Based on the above coupled dipole model, we propose an EM-compliant mutual coupling aware RHS-aided ISAC model to facilitate further holographic beamforming design, as shown in the next section.

	\section{System Model}
	\label{sec_system_model}
	In this section, we first introduce the RHS-aided ISAC system in clutter environments. Then, we propose a mutual coupling aware holographic beamforming model and present the data transmission scheme. Afterwards, we discuss the model's effectiveness in capturing the physical behaviour of mutual coupling effects within the RHS.
	\subsection{Scenario Description}
	\begin{figure}[!t]
		\centering
		\includegraphics[width=0.47\textwidth]{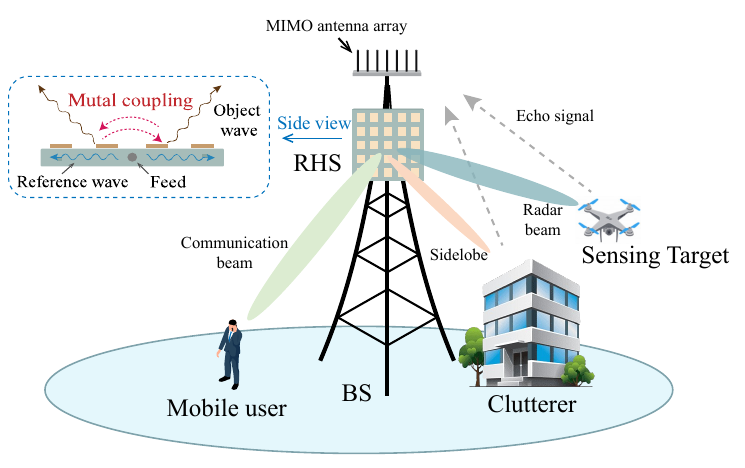}
		\caption{System model of an RHS-aided ISAC network with mutual coupling effects.}
		\label{sysmodel}
	\end{figure}
	
	As shown in Fig.~\ref{sysmodel}, we consider a holographic ISAC system that consists of $L$ mobile users denoted by $\mathbb{L}= \{1,\cdots,l,\cdots,L\}$, multiple targets, and a BS equipped with an RHS and a MIMO antenna array. Further, we assume there are $W$ clutterers in environments, where leakage power from the sidelobes of the RHS can propagate to various clutterers and the resulting echo signals can cause interference to sensing procedures. 
	
	To enable both sensing and communication functionalities, the system operates in three sequential steps~\cite{Haobo_HISAC_2022}. First, the BS optimizes the ISAC signals to maximize the minimum data transmission rate among the mobile users while ensuring the target radar performance. Next, the RHS transmits the ISAC signals with multiple beams directed toward the users and targets. Finally, the communication users receive and decode the transmitted signals to extract communication information, while the MIMO antenna array captures the echo signals reflected by the targets for radar sensing.
	

		\subsection{Mutual Coupling Aware Holographic Beamforming Model}	
	Based on the coupled dipole model in (\ref{coupled_dipole_model}), we will try to construct a mutual coupling aware holographic beamforming matrix $\bm{B}\in\mathbb{C}^{N\times T}$ in this part. Here, $t$-th column of the holoraphic beamformer $\bm{B}$, i.e., $\bm{b}_t$, represents the contribution of the input signal of the $t$-th feed on the radiated signals of the RHS elements. 
	
	
	Based on (\ref{coupled_dipole_model}) and Remark~\ref{interpretation}, the holographic beamformer $\bm{b}_t$ corresponding to the $t$-th feed can be modeled by
	\begin{align}
		\label{holographic_BF_single_feed}
		\bm{b}_t=k\left((e^{j\tau}\bm{\Theta})^{-1}-\bm{G}\right)^{-1} \widetilde{\bm{f}}_{t}^{ref}.
	\end{align}
	Here, $k$ is a complex constant relating the magnetic moments $\bm{m}$ of the equivalent dipoles to the transmitted signals $\bm{x}$ of the RHS elements, i.e., $\bm{x}=k\bm{m}$, which is independent of the configurations of the RHS elements. The derivation of $k$ is provided in Appendix~\ref{app_eff}.
	
	Then, by stacking the vector $\bm{b}_t$, the holographic beamforming matrix can be given by
	\begin{align}
		\label{Holographic_BF}
		\bm{B}=k\left((e^{j\tau}\bm{\Theta})^{-1}-\bm{G}\right)^{-1}\widetilde{\bm{F}}^{ref},
	\end{align}
	where $\widetilde{\bm{F}}^{ref}=[\widetilde{\bm{f}}_{1}^{ref},\dots,\widetilde{\bm{f}}_{T}^{ref}]$, with $T$ representing the number of feeds of the RHS.
	
	To investigate the influence of mutual coupling effects on the communication model, in the following remark, we compare the proposed mutual coupling aware holographic beamformer in (\ref{Holographic_BF}) against an existing one developed for an ideal RHS without mutual coupling effects~\cite{Deng_RHS_multi_user_2022}. 
	\begin{remark}\label{remark_compare_ideal}
		(Comparison with ideal holographic beamformer) For an ideal RHS without mutual coupling effects, the holographic beamformer can be written as~\cite{Deng_RHS_multi_user_2022},
		\begin{align}
			\label{ideal_Holographic_BF}
			\bm{B}^{ideal}=\bm{M}\widetilde{\bm{F}}^{ref},
		\end{align}
		where $\bm{M}$ is a diagonal matrix, with the diagonal elements being radiation amplitudes of the RHS elements. 
		
		By comparing (\ref{Holographic_BF}) and (\ref{ideal_Holographic_BF}), we can see that the presence of mutual coupling effects brings additional coupling matrix $\bm{G}$. This matrix reflects how the transmitted signals from the RHS elements is coupled to other RHS elements.
		
	\end{remark}
	Then, we further show the generalizability of the proposed model, as shown in Property~\ref{remark_eff}.
	\begin{property}
		\label{remark_eff}
		(Generalizability of proposed holographic beamformer) Although the proposed holographic beamformer in (\ref{Holographic_BF}) is intended for an RHS with mutual coupling effects, it can also be applied to an ideal case without mutual coupling, which, thus demonstrates its generalizability.
	\end{property}
	\begin{proof}
		See Appendix~\ref{app_eff}.
	\end{proof}

	\subsection{Integrated Sensing and Communication Scheme}
	\begin{figure*}[!t]
		\centering
		\includegraphics[width=0.7\textwidth]{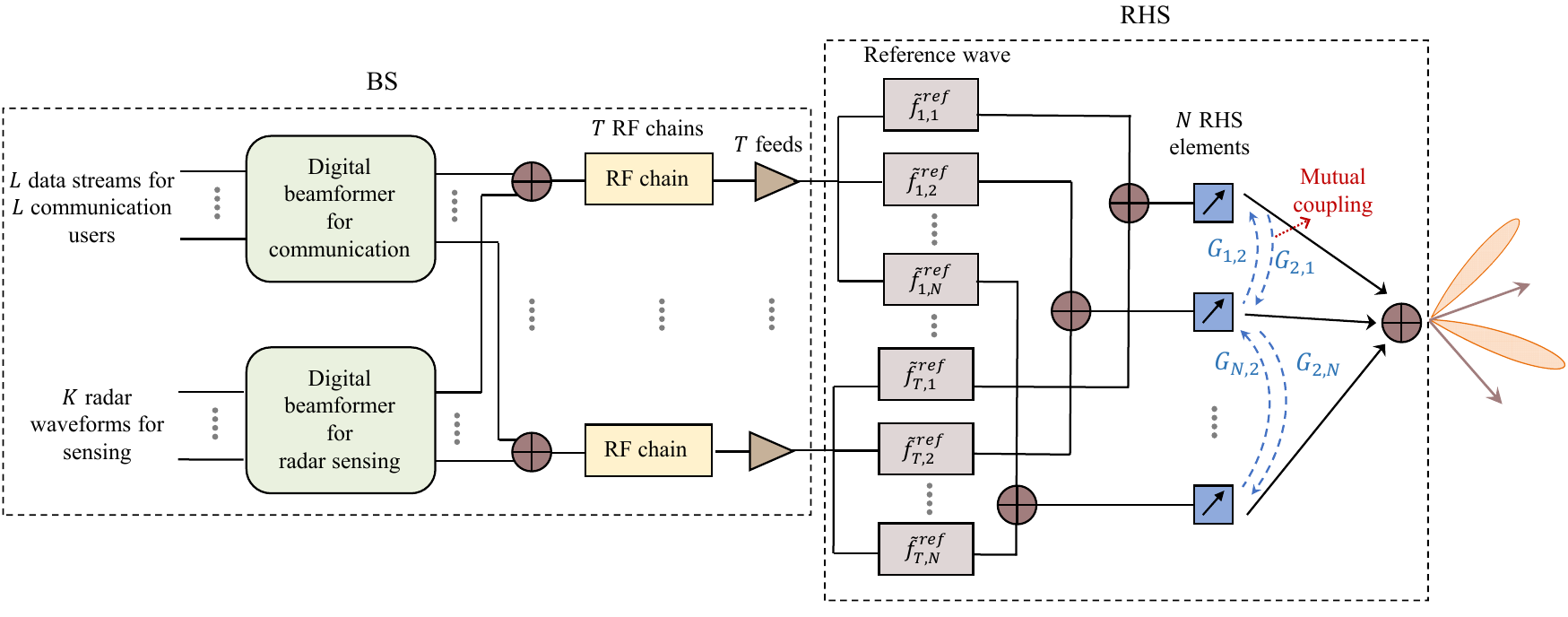}
		\caption{Illustration of the holographic beamforming scheme with mutual coupling effects.}
				\vspace{-3mm}
		\label{beamforming_illustration}
	\end{figure*}
	{As shown in Fig.~\ref{beamforming_illustration}, the holographic beamforming scheme consists of two parts}: the digital beamforming at the BS and the analog beamforming at the RHS. The communication data streams and radar waveforms are first processed by the BS via digital beamforming and are then sent to the RHS for generating the desired ISAC signals via analog beamforming.
		
	Specifically, $L$ data streams and $K$ radar waveforms are first processed by the BS via the digital beamformer for communication $\bm{V}_c \in \mathbb{C}^{T \times L}$ and for radar sensing $\bm{V}_s \in \mathbb{C}^{T \times K}$, respectively. The $L$ data streams carry the information to be sent to $L$ different users, while the $K$ radar waveforms are precoded and sent to the RF chains for radar sensing. Consequently, the signal sent to the $T$ RF chains can be expressed as	
	\begin{equation}
		\label{digital}
\bm{u} = \bm{V}_c \bm{c} + \bm{V}_s \bm{s},
	\end{equation}
	where $\bm{c} = (c_1, \dots, c_L)^{\mathsf{T}} \in \mathbb{C}^{L \times 1}$ denotes the communication symbols intended to the $L$ users, and $\bm{s} = (s_1, \dots, s_K)^{\mathsf{T}} \in \mathbb{C}^{K \times 1}$ denotes the radar waveforms. Each RF chain is connected to one of the feeds in the RHS. The signal $\bm{u}$ is input to the feeds via $T$ RF chains, which convert it from the digital to the RF domain. Then, the signals radiated by the feeds propagate through the waveguide, and finally, they are radiated by the amplitude-modulated radiating elements. Based on (\ref{Holographic_BF}), the object signal generated by the holographic beamforming scheme can be written as		
	\begin{equation}
		\label{tx_signal_RHS}
		\bm{x} = \bm{B} \bm{u}.
	\end{equation}
		
	Without loss of generality, we assume that each element of the communication symbol $\bm{c}$ and each element of the radar waveform $\bm{s}$ have unit power, which can be formulated as $\mathbb{E}(c_l c_l^H) = 1, \forall l$, and $\mathbb{E}(s_k s_k^H) = 1, \forall k$.		
	Furthermore, each communication stream (or radar waveform) is uncorrelated with the other communication streams or radar waveforms, which can be expressed as $\mathbb{E}(c_l c_{l'}^{\mathsf{*}}) = 0, \forall l, l', l \neq l'$, $\mathbb{E}(s_k s_{k'}^*) = 0, \forall k, k', k \neq k'$, and $\mathbb{E}(c_l s_k^*) = 0, \forall l, k$. Based on (\ref{digital}) and (\ref{tx_signal_RHS}), we can derive the data rate model and beampattern model for communication and radar sensing, respectively, as follows.
		
		
		\subsubsection{Data Rate Model for Communication}		
		Let $\bm{h}_l = (h_{1,l}, \dots, h_{N,l})^T$ denote the channel from the RHS to the \newline$l$-th mobile user. Then, the signal received by the $l$-th user can be expressed as		
		\begin{align}
			\label{rec_signal}
			y_l &= \bm{h}_l^{\mathsf{T}} \bm{B} (\bm{V}_c \bm{c} + \bm{V}_s \bm{s} ) + n_l,\notag\\
			&=\bm{h}_l^{\mathsf{T}}\bm{B}\bm{v}_{c,l}c_l+\bm{h}_l^{\mathsf{T}}\bm{B}\sum_{l'\neq l}\bm{v}_{c,l'}c_{l'}+\bm{h}_l^{\mathsf{T}}\bm{B}\bm{V}_s \bm{s}+n_l,
		\end{align}
		where $\bm{v}_{c,l}$ is the $l$-th column of matrix $\bm{V}_c$, and $n_l$ is the noise term which is a Gaussian random variable, i.e., $n_l \sim \mathcal{CN}(0, \sigma^2)$. In (\ref{rec_signal}), the second term and the third term denotes the inter-user interference and the interference caused by the sensing signal, respectively. Therefore, the SINR of the $l$-th user can be expressed as~\cite{Zeng_revisit}		
		\begin{equation}
			\gamma_l = \frac{|\bm{h}_l^{\mathsf{T}}\bm{B}\bm{v}_{c,l}|^2}{|\bm{h}_l^{\mathsf{T}}\bm{B}\sum_{l'\neq l}\bm{v}_{c,l'}|^2+\bm{h}_l^{\mathsf{T}}\bm{B}\bm{V}_s\bm{V}_s^{\mathsf{H}}\bm{B}^{\mathsf{H}}\bm{h}_l^* + \sigma^2 }.
		\end{equation}
		
		Let us define $\bm{V} = (\bm{V}_c, \bm{V}_s)$, and let $\bm{v}_l$ denote the $l$-th column of matrix $\bm{V}$. Then, the SINR of the $l$-th user can be rewritten as		
		\begin{equation}
			\gamma_l = \frac{\bm{h}_l^{\mathsf{T}} \bm{B} \bm{Q}_l \bm{B}^{\mathsf{H}}\bm{h}_l^*}{\bm{h}_l^{\mathsf{T}} \bm{B} \bm{Q} \bm{B}^{\mathsf{H}}\bm{h}_l^*-\bm{h}_l^{\mathsf{T}} \bm{B} \bm{Q}_l \bm{B}^{\mathsf{H}}\bm{h}_l^* + \sigma^2 }.
		\end{equation}		
		where we have introduced the covariance matrices $\bm{Q}_l = \bm{v}_l \bm{v}_l^{\mathsf{H}} \in \mathbb{C}^{T \times T}$ and $\bm{Q} = \bm{V} \bm{V}^{\mathsf{H}} = \sum_{l=1}^{L+K} \bm{v}_l \bm{v}_l^{\mathsf{H}}$.
		
		\subsubsection{Beampattern Model for Radar Sensing}		
		In this subsection, we introduce the far-field power beampattern for radar sensing. {For ease of discussion, Cartesian coordinates are introduced, where the $yoz$ plane coincides with the RHS, and the origin $O$ is located at the center of the RHS. The $x$-axis is perpendicular to the RHS. Under this coordinate system, we use $\theta$ and $\phi$ to represent the elevation and azimuth angles, respectively.} The far-field signal towards the direction $(\theta,\phi)$ can be written as		
		\begin{equation}
			z(\theta, \phi) = \bm{a}^{\mathsf{T}}(\theta, \phi) \bm{B} ( \bm{V}_c \bm{c} + \bm{V}_s \bm{s} ),
		\end{equation}		
		where $\bm{a}(\theta, \phi)=[a_1(\theta,\phi),\cdots,a_n(\theta,\phi),\cdots,a_N(\theta,\phi)]^{\mathsf{T}} \in \mathbb{C}^{N \times 1}$ denotes the steering vector of the RHS. Here, $a_n(\theta, \phi)$ is given by		
		\begin{equation}
			a_n(\theta, \phi) = \exp(-j \bm{k}_f(\theta,\phi)\bm{r}_m^e),
		\end{equation}
		where $\bm{k}_f(\theta,\phi)$ is the propagation vector in free space of the object signal with direction $(\theta,\phi)$. Therefore, the beampattern gain towards the direction $(\theta,\phi)$ is given by
		\begin{equation}
			\label{radiation_pattern}
			P(\theta, \phi) = \mathbb{E} \left(z(\theta, \phi)z^*(\theta, \phi) \right) = \bm{a}^{\mathsf{T}}(\theta, \phi) \bm{B} \bm{Q} \bm{B}^{\mathsf{H}} \bm{a}^*(\theta,\phi).
		\end{equation}
		{Based on (\ref{radiation_pattern}), we can also derive how the reference wave and the geometric channel model are linked to the radiation pattern of the RHS, as discussed in Appendix~\ref{app_radiation_pattern_analysis}.}

	\subsection{Discussions}
	In the following, we discuss the effectiveness of the proposed EM-compliant communication model to capture the physical behavior of mutual coupling effects in the RHS.
	
	By applying Neumann series to the holographic beamformer in (\ref{Holographic_BF}), the transmitted signal in (\ref{tx_signal_RHS}) can be expressed as
	\begin{align}
		\label{neumman}
		\bm{x}=&k(\sum_{i=0}^{\infty}(e^{j\tau}\bm{\Theta}\bm{G})^i\bm{\Theta})\widetilde{\bm{F}}^{ref}\bm{u},\notag\\
		=&k(e^{j\tau}\bm{\Theta})\widetilde{\bm{F}}^{ref}\bm{u}+k(e^{j\tau}\bm{\Theta})\bm{G}(e^{j\tau}\bm{\Theta})\widetilde{\bm{F}}^{ref}\bm{u}\notag\\
		&+k(e^{j\tau}\bm{\Theta})\bm{G}(e^{j\tau}\bm{\Theta})\bm{G}(e^{j\tau}\bm{\Theta})\widetilde{\bm{F}}^{ref}\bm{u}+\dots\notag\\
		\triangleq&\bm{x}_1+\bm{x}_2+\bm{x}_3+\dots
	\end{align}
	
	Here, the first term $\bm{x}_1\in\mathbb{C}^{N\times 1}$ represents the transmitted signals generated by the time-varying currents excited by the reference wave within the RHS elements.
	
	Then, we interpret the physical meaning of the second term $\bm{x}_2\in\mathbb{C}^{N\times 1}$. Specifically, signal $\bm{x}_1$ generated by the RHS elements can propagate to the other RHS elements and generate coupled fields $\bm{G}\bm{x}_1$, where $\bm{G}$ is coupling matrix and characterizes the effect of the propagation environment. The coupled fields will induce additional currents within the RHS elements, which, in turn, will generate new signals, i.e.,~$\bm{x}_2$. 
	
	Similarly, signal $\bm{x}_2$ generated by the RHS elements will further couple to other RHS elements, inducing new signals~$\bm{x}_3$. 
	
	This mutual interaction process continues indefinitely until convergence, and eventually the EM signals that are emitted out from the transmit RHS antenna and fed into the wireless channels, i.e., $\bm{x}$, are generated by the superposition of all these signals. The above discussion are summarized in the following remark. 
	\begin{remark}
		The proposed holographic beamformer in (\ref{Holographic_BF}) can effectively capture the mutual coupling effect in the RHS.
	\end{remark}

	\section{Impact of Mutual Coupling on ISAC Performance}
	\label{subsection_impact}
	Based on the derived mutual coupling aware holographic ISAC model in the previous section, we first analyze the impact of mutual coupling on the beampatterns generated by the RHS through numerical simulations in this section. Then, we provide a mathematical explanation of such impact.
	
	
	Specifically, we assume that an ISAC BS adopts an RHS array with $20\times 20$ elements and $T=3$ feeds. Further, we assume that there are one communication users and two sensing directions. The setting of the reference wave $\widetilde{\bm{F}}^{ref}$, the maximum and minimum magnitude of the polarizability $\bm{\Theta}$, the coupling matrix $\bm{G}$, and other related parameters can be found in the simulation part in Section~\ref{sec_simulation}. Existing holographic beamforming scheme in~\cite{Zhang_HISAC_2022} is adopted to configure the RHS, where the mutual coupling effect is not considered since ISAC performance is not sensitive to sidelobe level in the work. The beampattern generated by the RHS is given in Fig.~\ref{Superposability}. From Fig.~\ref{Superposability}, we can see that due to the mutual coupling effect, the RHS can generate notable sidelobes with a sidelobe level~(SLL) of $-1.88$~dB. The sidelobe may point towards environmental clutterers, cause interference to sensing procedure and deteriorating sensing performance.
	

	{
	\begin{figure}[!t]
		\centering
		\includegraphics[width=0.43\textwidth]{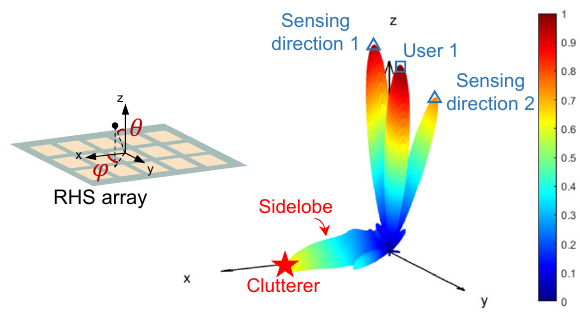}
		\caption{Beampattern of an RHS array when mutual coupling effects are not considered in holographic beamforming~\cite{Deng_VTM_2023}. One mobile user and two sensing directions are considered, with one clutterer in the environment. The number of RHS elements is set as $N=20^2$.}
		\vspace{-3mm}
		\label{Superposability}
	\end{figure}}

	Further, we try to explain the impact of mutual coupling mathematically. According to (\ref{ideal_Holographic_BF}), for an ideal RHS without mutual coupling, the transmitted signals of the RHS are
	\begin{align}
		\label{ideal_transmit_signal}
		\bm{x}=\bm{M}\widetilde{\bm{F}}^{ref}\bm{u},
	\end{align}
	where $\bm{u}$ is the signal vector input into different feeds of the RHS, $\widetilde{\bm{F}}^{ref}$ is reference signals generated by different feeds of the RHS, and $\bm{M}$ is the holographic pattern of the RHS. On the other hand, for an RHS with mutual coupling effects, the transmitted signals of the RHS in (\ref{neumman}) can be reformulated as
{
	\begin{align}
		\label{neumman_rewritten}
		\bm{x}=&k\Big(e^{j\tau}\bm{\Theta})(\widetilde{\bm{F}}^{ref}\!+\!\bm{G}(e^{j\tau}\bm{\Theta})\widetilde{\bm{F}}^{ref}\!\notag\\
		&\quad+\bm{G}(e^{j\tau}\bm{\Theta})\bm{G}(e^{j\tau}\bm{\Theta})\widetilde{\bm{F}}^{ref}\!+\!\dots\Big)\bm{u},\notag\\
		\triangleq& k(e^{j\tau}\bm{\Theta})(\widetilde{\bm{F}}^{ref})^{equ}\bm{u}.
	\end{align}}

	By comparing (\ref{neumman_rewritten}) with (\ref{ideal_transmit_signal}), we can understand $(\widetilde{\bm{F}}^{ref})^{equ}$ as ``equivalent reference wave" when mutual coupling effects are present, which is the superposition of the original reference wave $\widetilde{\bm{F}}^{ref}$ and coupled fields $\sum_{i=1}^{\infty}\left(\bm{G}(e^{j\tau}\bm{\Theta})\right)^i\widetilde{\bm{F}}^{ref}$. Since the equivalent reference wave distribution is different from the desired  original reference wave distribution, the generated beam of the RHS can be distorted. The above discussions are summarized into the following remark.
	\begin{remark}
		Due to the mutual coupling effect within RHS, coupled fields will superpose with original reference waves to form equivalent reference waves. Since the equivalent reference wave distributions are different from the original desired reference wave distributions, the generated beam of the RHS is distorted, generating notable undesired sidelobes.
	\end{remark}
	
	Therefore, it is critical to revisit the holographic beamforming problem for ISAC systems and design an efficient algorithm to eliminate sidelobes and improve ISAC performance, as shown in the following sections. 
	\vspace{-2mm}
	\section{Holographic ISAC Problem Formulation and Decomposition}
	\label{sec_problem_formulation}		
	\vspace{-1mm}
		In this section, to compensate for the performance degradation caused by the mutual coupling effect, we first formulate a holographic ISAC problem by taking the mutual coupling effect into account. Then, to efficiently solve the formulated problem, we decouple it into two subproblems, i.e., digital beamforming subproblem and holographic beamforming subproblem.
		\vspace{-2mm}
		\subsection{Problem Formulation}		
		The aim of the holographic ISAC problem is to improve communication performance subject to specified radar sensing and power constraints by jointly optimizing the digital beamformer $\bm{V}$ and holographic pattern $\bm{\Theta}$. To guarantee the user fairness, the minimum data rate among mobile users are selected as the objective function. Mathematically, the optimization problem can be formulated as:		
{\setlength{\abovedisplayskip}{2pt}
	\setlength{\belowdisplayskip}{3pt}
\begin{subequations}\label{opt_problem}
	\begin{align}
		\label{obj}
		&\max_{\bm{\Theta}, \bm{V}}  \min_{l} R_l \hfill\\
		\label{cons_max_sensing}
		s.t.~& P(\theta_w, \phi_w)\le G^{th}_w,\quad w = 1, \dots, W,\hfill\\
		\label{cons_fair_sensing}
		& \gamma_d^l P(\theta_1, \phi_1) \le  P(\theta_d, \phi_d) \le \gamma_d^u P(\theta_1, \phi_1), \quad d = 2, \dots, D, \hfill\\
		\label{cons_min_sensing}
		& P(\theta_d, \phi_d)\ge G^{th}_d,\quad d = 1, \dots, D,\hfill\\
		\label{cons_min_rate}
		& R_l\ge R^{th}_l,\quad l=1,\dots,L,\hfill\\
		\label{cons_max_transmit}
		& \Tr(\bm{V}\bm{V}^H)\le P_M,\hfill\\
		\label{cons_pol_range}
		& \theta_n\in[\theta_{min},\theta_{max}],\hfill
	\end{align}
\end{subequations}}		
		where $G^{th}_w$ is the maximum allowed beampattern gain in the direction of the $w$-th clutterer, \( \gamma_d^l \) and \( \gamma_d^u \) are two positive parameters controlling the beampattern gains at different directions, $G^{th}_d$ is the minimum required beampattern gain in the $d$-th angular region of interest, \( R_l^{th} \) is the minimum required data rate of the \( l \)-th user, and \( P_M \) is maximum transmit power. 
		
		{Constraint (\ref{cons_max_sensing}) limits the sidelobe levels in the direction \( (\theta_w, \phi_w) \), which cannot exceed threshold $G_w^{th}$. Constraint~(\ref{cons_fair_sensing}) ensures that the beampattern gains toward the directions \( (\theta_d, \phi_d) \) are within the range \( [\gamma_d^l P(\theta_1, \phi_1), \gamma_d^u P(\theta_1, \phi_1)] \), which allows us to control the beampattern gains towards different targets\footnote{The maximum allowed sidelobe level $G^{th}_w$ is predetermined, which is mainly dependent on the received SINR threshold for correct detection, the location and RCS of the $w$-th clutterer, the number $W$ of clutterers, noise variance $\sigma^2$, and the power of received desired sensing signals. Further, the parameters $\gamma_d^l$ and $\gamma_d^u$ are also pre-defined, where in a real system, we select values approaching $1$ for them so as to balance the detection accuracy in different targeted sensing directions.}.} Constraint (\ref{cons_min_sensing}) sets beampattern gain requirements for each angular region of interest\footnote{As noted in~\cite{Stoica_probing_2007}, the radar performance is related to the beampattern gains towards the directions of the targets. Specifically, by increasing the beampattern gains in correspondence of the direction of the targets, the SINR of the signals reflected by the targets (i.e., the echo signals) can be enhanced, leading to higher sensing performance.} Constraint (\ref{cons_min_rate}) guarantees the performance for each communication user. Constraint (\ref{cons_max_transmit}) is the sum-power constraint, and (\ref{cons_pol_range}) accounts for the value range for the polarizability magnitude of each RHS element.

		\subsection{Problem Decomposition}		
		Since the digital beamformer \( \bm{V} \) and the holographic pattern $\bm{\Theta}$ are coupled, which further makes the problem difficult to solve directly. To efficiently address this, we decompose the problem into the following two subproblems, optimizing \( \bm{V} \) and $\bm{\Theta}$ separately.
	\subsubsection{Digital Beamforming} 
	Given the holographic pattern $\bm{\Theta}$, the digital beamforming subproblem can be given by
	\begin{subequations}\label{opt_problem_DBF}
		\begin{align}
			\label{obj_DBF}
			&\max_{\bm{V}}  \min_{l} R_l \hfill\\
			\label{cons_max_sensing_DBF}
			s.t.~&P(\theta_w, \phi_w)\le G^{th}_w,\quad w = 1, \dots, W,\hfill\\
			\label{cons_fair_sensing_DBF}
			& \gamma_d^l P(\theta_1, \phi_1) \le  P(\theta_d, \phi_d) \le \gamma_d^u P(\theta_1, \phi_1), \quad d = 2, \dots, D, \hfill\\
			\label{cons_min_sensing_DBF}
			& P(\theta_d, \phi_d)\ge G^{th}_d,\quad d = 1, \dots, D,\hfill\\
			\label{cons_min_rate_DBF}
			& R_l\ge R^{th}_l,\quad l=1,\dots,L,\hfill\\
			\label{cons_max_transmit_DBF}
			& \Tr(\bm{V}\bm{V}^H)\le P_M.\hfill
		\end{align}
	\end{subequations}
	
	\subsubsection{Holographic Beamforming}
	Given the digital beamformer $\bm{V}$, the holographic beamforming subproblem can be given by
	\begin{subequations}\label{opt_problem_HBF}
		\begin{align}
			\label{obj_HBF}
			&\max_{\bm{\Theta}}  \min_{l} R_l \hfill\\
			\label{cons_max_sensing_HBF}
			s.t.~&P(\theta_w, \phi_w)\le G^{th}_w,\quad w = 1, \dots, W,\hfill\\
			\label{cons_fair_sensing_HBF}
			& \gamma_d^l P(\theta_1, \phi_1) \le  P(\theta_d, \phi_d) \le \gamma_d^u P(\theta_1, \phi_1), \quad d = 2, \dots, D, \hfill\\
			\label{cons_min_sensing_HBF}
			& P(\theta_d, \phi_d)\ge G^{th}_d,\quad d = 1, \dots, D,\hfill\\
			\label{cons_min_rate_HBF}
			& R_l\ge R^{th}_l,\quad l=1,\dots,L,\hfill\\
			\label{cons_pol_range_HBF}
			& \theta_n\in[\theta_{min},\theta_{max}].\hfill
		\end{align}
	\end{subequations}

\section{Mutual Coupling Aware Beamforming Algorithm Design}
\label{sec_optimization_algorithm}
In this section, we design algorithms to solve the two subproblems (\ref{opt_problem_DBF}) and (\ref{opt_problem_HBF}), respectively, based on which an overall mutual coupling aware algorithm is proposed to solve the holographic ISAC problem in (\ref{opt_problem}). 
\subsection{Digital Beamforming Design}
To solve the non-convex problem of digital beamforming, we convert it into a series of  quadratically constrained quadratic programs~(QCQP) through fractional programming, which can be effectively solved by applying semidefinite relaxation techniques~(SDR) techniques.
\subsubsection{Problem Reformulation}
\label{subsubsection_DB_reformulation}
The key idea is to eliminate the minimization operation in the objective function (\ref{obj_DBF}) and the non-convex ratio in the expression of $R_l$, i.e., $\frac{\bm{h}_l^T \bm{B} \bm{Q}_l \bm{B}^H\bm{h}_l^*}{\bm{h}_l^T \bm{B} \bm{Q} \bm{B}^H\bm{h}_l^*-\bm{h}_l^T \bm{B} \bm{Q}_l \bm{B}^H\bm{h}_l^* + \sigma^2 }$ based on fractional programming technique.

Specifically, since the data rate $R_l$ of the $l$-th user increases with the received SNR $\gamma_l$, the maximization of the minimum data rate among the users is equivalent to the maximization of the minimum received SNR among the users, i.e., the digital beamforming problem in (\ref{opt_problem_DBF}) is equivalent to
\begin{subequations}\label{opt_problem_DBF_v2}
	\begin{align}
		\label{obj_DBF_v2}
		&\max_{\bm{V}}  \min_{l} \gamma_l \hfill\\
		s.t.~&(\ref{cons_max_sensing_DBF})-(\ref{cons_max_transmit_DBF}).\hfill \notag
	\end{align}
\end{subequations}

To eliminate the minimization operation in the objective function, we introduce auxiliary variable $\tau$,~i.e.,
\begin{subequations}\label{opt_problem_DBF_v3}
	\begin{align}
		\label{obj_DBF_v3}
		&\max_{\bm{V},\tau}  \tau  \hfill\\
		\label{cons_aux_DBF_v3}
		s.t.~&\tau\le \gamma_l,~l=1,\dots,L,\hfill\\
		&(\ref{cons_max_sensing_DBF})-(\ref{cons_max_transmit_DBF}),\hfill \notag
	\end{align}
\end{subequations}
 where constraint (\ref{cons_aux_DBF_v3}) is introduced to force the auxiliary variable $\tau$ to approach the minimum SNR among the users. 
 
 Note that constraint (\ref{cons_aux_DBF_v3}) is  non-convex due to the non-convex ratio in the expression of $\gamma_l$. To address this, we apply the fractional programming technique, and thus problem (\ref{opt_problem_DBF_v3}) can be equivalently transformed into
\begin{subequations}\label{opt_problem_DBF_v4}
	\begin{align}
		\label{obj_DBF_v4}
		&\max_{\bm{V},\tau,\{\rho_l\}_l}  \tau  \hfill\\
		\label{cons_aux_DBF_v4}
		s.t.~&\tau\le-|\rho_l|^2(\bm{h}_l^T \bm{B} \bm{Q} \bm{B}^H\bm{h}_l^*-\bm{h}_l^T \bm{B} \bm{Q}_l \bm{B}^H\bm{h}_l^* + \sigma^2) \notag\hfill\\
		&\quad~~+2\Re(\rho_l^*\bm{v}_{c,l}^H\bm{B}^H\bm{h}_l^*),\hfill\\
		&(\ref{cons_max_sensing_DBF})-(\ref{cons_max_transmit_DBF}),\hfill \notag
	\end{align}
\end{subequations}
 To solve problem (\ref{opt_problem_DBF_v4}), we perform iterative optimization over $\{\rho_l\}_l$, $\bm{V}$, and $\tau$. The derivation of optimal solutions with respect to these variables is illustrated below.
 
 \subsubsection{Digital Beamforming Design}
 Given variables $\bm{V}$, we first jointly optimize $\{\rho_l\}_l$ and $\tau$, with the optimal $\rho_l^{opt}$ given by~\cite{Shen_load_spawc}
 \begin{align}
 	\label{rho_opt}
 	\rho_l^{opt}=\frac{\bm{v}_{c,l}^{\mathsf{H}}\bm{B}^{\mathsf{H}}\bm{h}_l^*}{\bm{h}_l^{\mathsf{T}} \bm{B} \bm{Q} \bm{B}^{\mathsf{H}}\bm{h}_l^*-\bm{h}_l^{\mathsf{T}} \bm{B} \bm{Q}_l \bm{B}^{\mathsf{H}}\bm{h}_l^* + \sigma^2}.
 \end{align}
 
 Then, give the auxiliary variables $\{\rho_l\}_l$, the digital beamformer can be optimized by solving the following problem,
 {\setlength{\abovedisplayskip}{-6pt}
 	\setlength{\belowdisplayskip}{3pt}
 \begin{subequations}\label{opt_problem_DBF_v5}
 	\begin{align}
 		\label{obj_DBF_v5}
 		&\max_{\bm{V},\tau}  \tau  \hfill\\
 		s.t.~&(\ref{cons_aux_DBF_v4}),~(\ref{cons_max_sensing_DBF})-(\ref{cons_max_transmit_DBF}).\hfill \notag
 	\end{align}
 \end{subequations}}
 
 Since problem (\ref{opt_problem_DBF_v5}) is a quadratic constrained quadratic programs~(QCQP), we can apply the SDR technique to solve it. Specifically, since the right-hand-side of constraint (\ref{cons_aux_DBF_v4}) is a quadratic function with respect to digital beamforming vector $\bm{v}_{l,c}$ with linear terms, we introduce auxiliary variable $t$ into $\bm{v}_{l,c}$ to form $\bm{\widetilde{v}}_{l,c}=[\bm{v}_l^{\mathsf{T}},t]^{\mathsf{T}}$ so as to remove the linear term. Then, we can see that the constraints are determined by the covariance matrices $\bm{\widetilde{Q}}_l = \bm{\widetilde{v}}_l\bm{\widetilde{v}}_l^{\mathsf{H}}$ and $\bm{Q} = \bm{V} \bm{V}^{\mathsf{H}}$. Therefore, problem (\ref{opt_problem_DBF_v5}) can be reformulated, as a function of these latter matrices, in the equivalent problem:
 {\setlength{\abovedisplayskip}{-0pt}
 	\setlength{\belowdisplayskip}{3pt}
 \begin{subequations}\label{opt_problem_DBF_v6}
 	\begin{align}
 		\label{obj_DBF_v6}
 		&\max_{\bm{Q},\{\bm{\widetilde{Q}}_l\}_l,\tau}  \tau  \hfill\\
 		\label{cons_auxiliary_v6}
 		s.t.~&\tau\le\Tr(\bm{\widetilde{Q}}_l\bm{\widetilde{C}}_l)-\Tr(\bm{Q}(|\rho_l|^2\bm{A}_l))-|\rho_l|^2\sigma^2, l=1,\dots,L,\hfill\\
 		\label{cons_max_sensing_v6}
 		&\Tr(\bm{Q}\bm{A}(\theta_w,\phi_w))\le G_w^{th},w=1,\dots,W,\hfill\\
 		\label{cons_pattern_gain_lb_v6}
 		&\Tr(\bm{Q}\bm{A}(\theta_d,\phi_d))-\gamma_d^l\Tr(\bm{Q}\bm{A}(\theta_1,\phi_1))\ge 0,d=2,\dots,D,\hfill\\
 		\label{cons_pattern_gain_ub_v6}
 		&\Tr(\bm{Q}\bm{A}(\theta_d,\phi_d))-\gamma_d^u\Tr(\bm{Q}\bm{A}(\theta_1,\phi_1))\le 0,d=2,\dots,D,\hfill\\
 		\label{cons_pattern_gain_threshold_v6}
 		&\Tr(\bm{Q}\bm{A}(\theta_d,\phi_d))\ge G_d^{th},d=1,\dots,D,\hfill\\
 		\label{cons_rate_threshold_v6}
 		&(\frac{1}{2^{R_l^{th}}-1}+1)\Tr(\bm{\widetilde{Q}}_l\bm{\widetilde{A}}_l)-\Tr(\bm{Q}\bm{A}_l)\ge \sigma^2,~l=1,\dots,L\hfill\\
 		\label{cons_max_transmit_power_v6}
 		&\Tr(\bm{Q})\le P_M,\hfill\\
 		\label{cons_additional_t_v6}
 		&\Tr(\bm{\widetilde{Q}}_lZ)=1,\hfill\\
 		\label{cons_Q_l_semidefinite_v6}
 		&\bm{\widetilde{Q}}_l\succeq 0, l=1,\dots,L,\hfill\\
 		\label{cons_Q_l_rank_1_v6}
 		&\rank(\bm{\widetilde{Q}}_l)=1,l=1,\dots,L,\hfill\\
 		\label{cons_Q_semidefinite_v6}
 		&\bm{Q}\succeq 0,\hfill\\
 		\label{cons_Q_sensing_semidefinite_v6}
 		&\bm{Q}-\sum_{l=1}^L\bm{E}\bm{\widetilde{Q}}_l\bm{E}^{\mathsf{T}}\succeq 0.\hfill
 	\end{align}
 \end{subequations}}
 Here, constraint (\ref{cons_pattern_gain_lb_v6}) and (\ref{cons_pattern_gain_ub_v6}) correspond to constraint (\ref{cons_fair_sensing_DBF}). Further, constraint (\ref{cons_auxiliary_v6}), (\ref{cons_max_sensing_v6}), (\ref{cons_pattern_gain_threshold_v6}), (\ref{cons_rate_threshold_v6}), and (\ref{cons_max_transmit_power_v6}) are linear formulations of the original constraints (\ref{cons_aux_DBF_v4}), (\ref{cons_max_sensing_DBF}), (\ref{cons_min_sensing_DBF}), (\ref{cons_min_rate_DBF}), and (\ref{cons_max_transmit_DBF}), respectively. Constraint (\ref{cons_Q_l_semidefinite_v6}) and (\ref{cons_Q_semidefinite_v6}) ensure that both matrices $\bm{\widetilde{Q}}_l$ and $\bm{Q}$ are positive semidefinite. Further, constraint (\ref{cons_Q_l_semidefinite_v6}) and (\ref{cons_Q_l_rank_1_v6}) guarantee that the matrix $\bm{\widetilde{Q}}_l$ can be decomposed as $\bm{\widetilde{Q}}_l=\bm{\widetilde{v}}_l\bm{\widetilde{v}}_l^{\mathsf{H}}$. Further, constraint (\ref{cons_Q_sensing_semidefinite_v6}) guarantees that matrix $\bm{Q}-\sum_{l=1}^L\bm{E}\bm{\widetilde{Q}}_l\bm{E}^{\mathsf{T}}$ can be decomposed as $\bm{V}_s\bm{V}_s^{\mathsf{H}}$. In addition, constraint (\ref{cons_additional_t_v6}) is a linear formulation of the additional constraint brought by the introduction of auxiliary variable $t$, i.e., $tt^*=1$. Also, we have utilized the following definitions:
 \begin{align}
 	\bm{A}(\theta_d,\phi_d)&=\bm{B}^{\mathsf{H}} \bm{a}^*(\theta,\phi)\bm{a}^{\mathsf{T}}(\theta, \phi) \bm{B},\\
 	\bm{A}_l&=\bm{B}^{\mathsf{H}}\bm{h}_l^*\bm{h}_l^{\mathsf{T}} \bm{B},\\
 	\bm{\widetilde{A}}_l &= 
 	\begin{bmatrix}
 		\bm{A}_l & \bm{0}_{K\times1} \\
 		\bm{0}_{1\times K} & 0
 	\end{bmatrix},\\
 	\bm{\widetilde{C}}_l&=
 	\begin{bmatrix}
 		|\rho_l|^2\bm{A}_l & \rho_l^*\bm{B}^{\mathsf{H}}\bm{h}_l^* \\
 		\rho_l\bm{h}_l^{\mathsf{T}}\bm{B} & 0
 	\end{bmatrix},\\
 	\bm{E}&=[\bm{I}_{K\times K},\bm{0}_{K\times 1}],\\
	\bm{Z}&=\begin{bmatrix}
		0 & \cdots & & & 0 \\
		\vdots & \ddots & & & \vdots \\
		& & 0 & & \\
		\vdots & & & \ddots & \vdots \\
		0 & \cdots & & & 1
	\end{bmatrix}.
 \end{align}
 
 Problem (\ref{opt_problem_DBF_v6}) is still non-convex due to the rank-one constraint (\ref{cons_Q_l_rank_1_v6}). According to the SDR technique, we first drop the rank   constraint (\ref{cons_Q_l_rank_1_v6}) to obtain a relaxed version of problem (\ref{opt_problem_DBF_v6}), which is convex. The relaxed convex optimization problem can thus be solved effectively by using interior point methods.   Let us denote by $\bm{Q}^{rlx}, \bm{\widetilde{Q}}_1^{rlx}, \cdots, \bm{\widetilde{Q}}_L^{rlx}$ the optimal solution of the relaxed convex optimization problem. Similar to the proof in \cite{Haobo_HISAC_2022}, a rank-one solution $\bm{\widetilde{Q}}_1,\cdots, \bm{\widetilde{Q}}_L$ of problem (\ref{opt_problem_DBF_v6}) can be obtained as the following theorem.
 \begin{theorem}
 	\label{prop_recover_rank_one_solution}
 	Let $\bm{Q}^{rlx}, \bm{\widetilde{Q}}_1^{rlx}, \cdots, \bm{\widetilde{Q}}_L^{rlx}$ denote the optimal solution to the relaxed problem obtained from (\ref{opt_problem_DBF_v6}). A solution of problem (\ref{opt_problem_DBF_v6}) can be given by
 	\begin{align}
 		\bm{Q}&=\bm{Q}^{rlx},\\
 		\bm{\widetilde{Q}}_l&=
 		\begin{bmatrix}
 			\frac{\bm{v}_{l,c}^{rlx}(\bm{v}_{l,c}^{rlx})^{\mathsf{H}}\bm{B}^{\mathsf{H}}\bm{h}_l^*\bm{h}_l^{\mathsf{T}}\bm{B}\bm{v}_{l,c}^{rlx}(\bm{v}_{l,c}^{rlx})^{\mathsf{H}}}{\bm{h}_l^{\mathsf{T}}\bm{B}\bm{v}_{l,c}^{rlx}(\bm{v}_{l,c}^{rlx})^{\mathsf{H}}\bm{B}^{\mathsf{H}}\bm{h}_l^*} & \frac{\bm{v}_{l,c}^{rlx}(\bm{v}_{l,c}^{rlx})^{\mathsf{H}}\bm{B}^{\mathsf{H}}\bm{h}_l^*}{|\bm{h}_l^{\mathsf{T}}\bm{B}\bm{v}_{l,c}^{rlx}(\bm{v}_{l,c}^{rlx})^{\mathsf{H}}|} \\
 			\frac{\bm{h}_l^{\mathsf{T}}\bm{B}\bm{v}_{l,c}^{rlx}(\bm{v}_{l,c}^{rlx})^{\mathsf{H}}}{|\bm{h}_l^{\mathsf{T}}\bm{B}\bm{v}_{l,c}^{rlx}(\bm{v}_{l,c}^{rlx})^{\mathsf{H}}|} & 1
 		\end{bmatrix}\\
 		\tau&=\min_l\Tr(\bm{\widetilde{Q}}_l\bm{\widetilde{C}}_l)-\Tr(\bm{Q}(|\rho_l|^2\bm{A}_l))-|\rho_l|^2\sigma^2,
 	\end{align}
 	where $\bm{\widetilde{v}}_{l,c}^{rlx}=\xi_l^{max}\bm{\eta}_l$. Here, $\xi_l^{max}$ is the maximum eigenvalue of matrix $\bm{\widetilde{Q}}_l$ and $\bm{\eta}_l$ is the corresponding eigenvector. Then, denote $\bm{v}_{l,c}^{rlx}$ as the vector consisting of the first $T$ elements of vector $\bm{\widetilde{v}}_{l,c}^{rlx}$, i.e., $\bm{v}_{l,c}^{rlx}=\bm{\widetilde{v}}_{l,c}^{rlx}(1:T)$.
 \end{theorem}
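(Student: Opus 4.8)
The plan is to follow the standard semidefinite-relaxation rank-one recovery argument. First I would observe that dropping the rank constraint (\ref{cons_Q_l_rank_1_v6}) only enlarges the feasible set, so the optimum of the relaxed problem upper-bounds the optimum of (\ref{opt_problem_DBF_v6}). Hence it suffices to show that the explicitly constructed triple $(\bm{Q},\{\bm{\widetilde{Q}}_l\}_l,\tau)$ is, first, feasible for the original problem (\ref{opt_problem_DBF_v6}) including the rank-one requirement, and second, attains an objective value no smaller than the relaxed optimum; together these force the constructed point to be globally optimal for (\ref{opt_problem_DBF_v6}).

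The feasibility check naturally splits into an easy part and a hard part. For the easy part I would first verify by direct computation that the constructed $\bm{\widetilde{Q}}_l$ factorizes as $\bm{\widetilde{v}}_l\bm{\widetilde{v}}_l^{\mathsf{H}}$ with $\bm{\widetilde{v}}_l=[\bm{v}_{l,c}^{\mathsf{T}},t]^{\mathsf{T}}$ and $|t|=1$: writing $\bm{g}_l=\bm{B}^{\mathsf{H}}\bm{h}_l^*$ and $\bm{w}_l=\bm{v}_{l,c}^{rlx}$, the leading block collapses to $\bm{w}_l\bm{w}_l^{\mathsf{H}}$ because the scalar factor $|\bm{g}_l^{\mathsf{H}}\bm{w}_l|^2$ cancels between numerator and denominator. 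This immediately delivers $\bm{\widetilde{Q}}_l\succeq 0$ and $\rank(\bm{\widetilde{Q}}_l)=1$, i.e. (\ref{cons_Q_l_semidefinite_v6}) and (\ref{cons_Q_l_rank_1_v6}), while the unit bottom-right entry gives (\ref{cons_additional_t_v6}). Since $\bm{Q}=\bm{Q}^{rlx}$ is unchanged, the constraints depending on $\bm{Q}$ alone, namely (\ref{cons_max_sensing_v6})--(\ref{cons_pattern_gain_threshold_v6}), (\ref{cons_max_transmit_power_v6}), and (\ref{cons_Q_semidefinite_v6}), are inherited verbatim from the relaxed solution.

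The hard part concerns the constraints that couple $\bm{\widetilde{Q}}_l$ to $\bm{h}_l$ in an indefinite way. The rate constraint (\ref{cons_rate_threshold_v6}) depends on $\bm{\widetilde{Q}}_l$ only through $\Tr(\bm{\widetilde{Q}}_l\bm{\widetilde{A}}_l)$, and since $\bm{\widetilde{A}}_l$ is block-diagonal with $\bm{A}_l$ in its leading block, this reduces to $\Tr(\bm{w}_l\bm{w}_l^{\mathsf{H}}\bm{A}_l)=|\bm{g}_l^{\mathsf{H}}\bm{w}_l|^2$, which I would show is no smaller than the corresponding value at $\bm{\widetilde{Q}}_l^{rlx}$. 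The genuinely delicate object is the auxiliary constraint (\ref{cons_auxiliary_v6}), which defines the objective through $\Tr(\bm{\widetilde{Q}}_l\bm{\widetilde{C}}_l)$. Because $\bm{\widetilde{C}}_l$ has a zero bottom-right block but nonzero off-diagonal coupling $\rho_l^*\bm{g}_l$, it is indefinite, so PSD monotonicity cannot be invoked directly; the phase of $t$ in the construction is chosen precisely so that the off-diagonal cross terms add constructively, and a Cauchy--Schwarz / phase-alignment estimate then yields $\Tr(\bm{\widetilde{Q}}_l\bm{\widetilde{C}}_l)\ge\Tr(\bm{\widetilde{Q}}_l^{rlx}\bm{\widetilde{C}}_l)$, so the prescribed $\tau$ is at least the relaxed optimum. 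This inequality, together with the simultaneous need to preserve the sensing-covariance constraint (\ref{cons_Q_sensing_semidefinite_v6}), is where I expect the main difficulty: since $\bm{E}$ extracts the leading block, one must verify $\bm{Q}^{rlx}-\sum_l\bm{E}\bm{\widetilde{Q}}_l\bm{E}^{\mathsf{T}}=\bm{Q}^{rlx}-\sum_l\bm{w}_l\bm{w}_l^{\mathsf{H}}\succeq 0$, i.e. that replacing the leading blocks of the $\bm{\widetilde{Q}}_l^{rlx}$ by their rank-one surrogates does not destroy positive semidefiniteness of the residual sensing covariance.

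Finally, I would assemble the pieces: the constructed point is feasible for (\ref{opt_problem_DBF_v6}) and its objective $\tau$ is no smaller than the relaxed optimum, which itself upper-bounds the original optimum; therefore the constructed rank-one solution is globally optimal, completing the proof. I would model the indefinite-trace estimate and the residual-PSD argument on the analogous recovery result in~\cite{Haobo_HISAC_2022}, adapting the algebra to the present $\bm{B}$-dependent matrices $\bm{A}_l$, $\bm{\widetilde{A}}_l$, and $\bm{\widetilde{C}}_l$.
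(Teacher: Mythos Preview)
Your proposal is sound and follows exactly the standard SDR rank-one recovery template that the paper invokes: the paper gives no self-contained proof of Theorem~\ref{prop_recover_rank_one_solution} at all, merely stating before the theorem that the construction follows ``similar to the proof in~\cite{Haobo_HISAC_2022}.'' Your plan is therefore strictly more detailed than the paper's own treatment, and the two points you correctly flagged as the crux---the indefinite-trace estimate for $\Tr(\bm{\widetilde{Q}}_l\bm{\widetilde{C}}_l)$ and the residual-PSD check for (\ref{cons_Q_sensing_semidefinite_v6})---are precisely the steps that both you and the paper delegate to that reference.
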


\subsection{Holographic Beamforming Design}
Recall the holographic beamformer derived in (\ref{Holographic_BF}). Due to the non-linearity brought by the coupling intensity matrix $\bm{G}$, the holographic beamforming subproblem (\ref{opt_problem_HBF}) is non-convex and thus is difficult to solve. To tackle this issue, our main idea is to convert it into a series of QCQP sub-problems based on Neuman series, which can be tackled through SDR techniques.

Similar to Section~\ref{subsubsection_DB_reformulation}, we first eliminate the minimization operation and the non-convex ratio in the objective function (\ref{obj_HBF}) through fractional programming, where the holographic beamforming problem in (\ref{opt_problem_HBF}) is reformulated as
\begin{subequations}\label{opt_v2_problem_HBF}
	\begin{align}
		\label{obj_v2_HBF}
		&\max_{\bm{\Theta},\tau,\{\rho_l\}_l}  \tau  \hfill\\
		s.t.~&\tau\le-|\rho_l|^2(\bm{h}_l^{\mathsf{T}} \bm{B} \bm{Q} \bm{B}^{\mathsf{H}}\bm{h}_l^*-\bm{h}_l^{\mathsf{T}} \bm{B} \bm{Q}_l \bm{B}^{\mathsf{H}}\bm{h}_l^* + \sigma^2) \notag\hfill\\
		\label{cons_v2_aux_HBF}
		&\quad~~+2\Re(\rho_l^*\bm{v}_{c,l}^{\mathsf{H}}\bm{B}^{\mathsf{H}}\bm{h}_l^*),\hfill\\
		&(\ref{cons_max_sensing_HBF})-(\ref{cons_pol_range_HBF}).\hfill \notag
	\end{align}
\end{subequations}

Similarly, we can solve problem (\ref{opt_v2_problem_HBF}) by performing iterative optimizations over variables $\{\rho_l\}_l$, $\bm{\Theta}$, and $\tau$. Specifically, given the holographic pattern $\bm{\Theta}$, we first optimize $\{\rho_l\}_l$ and $\tau$ jointly, where the optimal $\rho_l^{opt}$ can be given by
\begin{align}
	\label{rho_opt_HBF}
	\rho_l^{opt}=\frac{\bm{v}_{c,l}^{\mathsf{H}}\bm{B}^{\mathsf{H}}\bm{h}_l^*}{\bm{h}_l^{\mathsf{T}} \bm{B} \bm{Q} \bm{B}^{\mathsf{H}}\bm{h}_l^*-\bm{h}_l^{\mathsf{T}} \bm{B} \bm{Q}_l \bm{B}^{\mathsf{H}}\bm{h}_l^* + \sigma^2}.
\end{align}

Then, given the auxiliary variables $\{\rho_l\}_l$, the holographic pattern $\bm{\Theta}$ can be optimized by solving the following problem
{
	\setlength{\abovedisplayskip}{-6pt}
	\setlength{\belowdisplayskip}{3pt}
\begin{subequations}\label{opt_v3_problem_HBF}
	\begin{align}
		\label{obj_v3_HBF}
		&\max_{\bm{\Theta},\tau}  \tau  \hfill\\
		s.t.~&(\ref{cons_v2_aux_HBF}), (\ref{cons_max_sensing_HBF})-(\ref{cons_pol_range_HBF}),\hfill \notag
	\end{align}
\end{subequations}}
We can see that the holographic pattern $\bm{\Theta}$ has an influence on the constraint functions of problem (\ref{opt_v3_problem_HBF}) through holographic beamformer $\bm{B}$. According to (\ref{Holographic_BF}), due to the mutual coupling effect, the holographic pattern $\bm{\Theta}$ is related to the holographic beamformer $\bm{B}$ through inverse matrix operations, which brings non-linearity to holographic beamforming problem, making it inherently non-convex and thus challenging to solve. 

To cope with this issue, we can make use of the Neumann series to get rid of the inverse matrix operations. The main idea is that in each iteration, we only allow small changes to the holographic pattern $\bm{\Theta}$, which can be denoted by $\delta_k\widetilde{\bm{\Theta}}$. Therefore, the holographic beamformer $\bm{B}$ can be reformulated as a linear function with respect to the small increment $\delta\widetilde{\bm{\Theta}}$ by utilizing the Neumann series.

Specifically, in the $k$-th iteration, we can update the holographic pattern in the following manner,
\begin{align}
	\label{delta_Theta}
	\bm{\Theta}_{k}^{-1}=\bm{\Theta}_{k-1}^{-1}-\delta_k\widetilde{\bm{\Theta}},
\end{align}
where $\bm{\Theta}_{k}$ is the holographic pattern generated in the $k$-th iteration, $\delta\ll 1$ is the step length in the $k$-th iteration. Further, the diagonal matrix $\widetilde{\bm{\Theta}}=\diag\{\widetilde{\theta}_1,\dots,\widetilde{\theta}_N\}\in\mathbb{C}^{N\times N}$ represents normalized change of the holographic pattern, where each diagonal element satisfies $\widetilde{\theta}_n\in[-1,1]$. Therefore, the holographic beamformer in the $k$-th iteration is rewritten as
\begin{align}
	\label{B_k}
	\bm{B}_k&=k((e^{j\tau}\bm{\Theta}_k)^{-1}-\bm{G})^{-1}\widetilde{\bm{F}}^{ref}\\
	&=k((e^{j\tau}\bm{\Theta}_{k-1})^{-1}-\bm{G}-e^{-j\tau}\delta_k\widetilde{\bm{\Theta}})^{-1}\widetilde{\bm{F}}^{ref}
\end{align} 

Based on (\ref{delta_Theta}), by replacing the holographic pattern $\bm{\Theta}$ involved in the constraints of problem (\ref{opt_v3_problem_HBF}) with $(\bm{\Theta}_{k-1}^{-1}-\delta_k\widetilde{\bm{\Theta}})^{-1}$, we acquire the subproblem for the $k$-th iteration,~i.e.,
{
	\setlength{\abovedisplayskip}{-6pt}
	\setlength{\belowdisplayskip}{3pt}
\begin{subequations}\label{opt_v4_problem_HBF}
	\begin{align}
		\label{obj_v4_HBF}
		&\max_{\widetilde{\bm{\Theta}},\tau}  \tau  \hfill\\
		\label{cons_v4_theta_change}
		s.t.~&\widetilde{\theta}_n\in[-1,1],\hfill\\
		\label{cons_v4_theta_range}
		&(\theta_n)_{k-1}^{-1}-\delta_k\widetilde{\theta}_n\in[\theta_{max}^{-1},\theta_{min}^{-1}],\hfill\\
		&(\ref{cons_v2_aux_HBF}), (\ref{cons_max_sensing_HBF})-(\ref{cons_min_rate_HBF}),\hfill \notag
	\end{align}
\end{subequations}}
where constraint (\ref{cons_v4_theta_range}) indicates that the polarizability magnitude of each RHS element should fall within the range specified in (\ref{cons_pol_range_HBF}). 

Since $\delta_k$ takes a small value, problem (\ref{opt_v4_problem_HBF}) can be simplified through applying the Neumann series to the holographic beamformer $\bm{B}_k$, as shown below.
\begin{lemma}
	\label{lemma_approx_HB}
	When the step length $\delta_k$ satisfies $\delta_k\ll \|\bm{S}_k\|^{-1}$, the holographic beamformer in (\ref{Holographic_BF}) can be simplified as
	\begin{align}
		\label{approx_HB}
		\bm{B}_k\approx ke^{j\tau}\bm{S}_k\widetilde{\bm{F}}^{ref}+ke^{j\tau}\delta_k\bm{S}_k\widetilde{\bm{\Theta}}\bm{S}_k\widetilde{\bm{F}}^{ref},
	\end{align}
	where $\bm{S}_k=e^{-j\tau}((e^{j\tau}\bm{\Theta}_{k-1})^{-1}-\bm{G})^{-1}$ is a constant matrix, and $\|\cdot\|$ indicates the spectral norm of a matrix. 
\end{lemma}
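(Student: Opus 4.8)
The plan is to reduce the perturbed matrix inverse defining $\bm{B}_k$ to a standard $(\bm{I}-\bm{X})^{-1}$ form and then truncate its Neumann series at first order. First I would substitute the update rule (\ref{delta_Theta}) into the beamformer in (\ref{B_k}), writing the perturbed matrix as $(e^{j\tau}\bm{\Theta}_k)^{-1}-\bm{G}=\bm{M}_{k-1}-e^{-j\tau}\delta_k\widetilde{\bm{\Theta}}$, where $\bm{M}_{k-1}\triangleq(e^{j\tau}\bm{\Theta}_{k-1})^{-1}-\bm{G}$ is the constant base matrix from the previous iteration. The key identification is that the matrix in the lemma satisfies $\bm{S}_k=e^{-j\tau}\bm{M}_{k-1}^{-1}$, equivalently $\bm{M}_{k-1}^{-1}=e^{j\tau}\bm{S}_k$, which is precisely what lets the final result be expressed purely in terms of $\bm{S}_k$.

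Next I would factor the base matrix out of the perturbed inverse and cancel the phase factors:
\[
\left(\bm{M}_{k-1}-e^{-j\tau}\delta_k\widetilde{\bm{\Theta}}\right)^{-1}=\left(\bm{I}-e^{-j\tau}\delta_k\bm{M}_{k-1}^{-1}\widetilde{\bm{\Theta}}\right)^{-1}\bm{M}_{k-1}^{-1}=\left(\bm{I}-\delta_k\bm{S}_k\widetilde{\bm{\Theta}}\right)^{-1}e^{j\tau}\bm{S}_k,
\]
where substituting $\bm{M}_{k-1}^{-1}=e^{j\tau}\bm{S}_k$ collapses the $e^{-j\tau}e^{j\tau}$ product inside the bracket. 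This isolates a clean Neumann structure with $\bm{X}\triangleq\delta_k\bm{S}_k\widetilde{\bm{\Theta}}$. I would then invoke $(\bm{I}-\bm{X})^{-1}=\sum_{i=0}^{\infty}\bm{X}^i$, retain only the zeroth- and first-order terms, $(\bm{I}-\bm{X})^{-1}\approx\bm{I}+\delta_k\bm{S}_k\widetilde{\bm{\Theta}}$, and restore the scalar $k$ and the right factor $\widetilde{\bm{F}}^{ref}$, which reproduces exactly (\ref{approx_HB}).

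The convergence of the series and the legitimacy of the truncation both hinge on bounding $\|\bm{X}\|$. Since $\widetilde{\bm{\Theta}}=\diag\{\widetilde{\theta}_1,\dots,\widetilde{\theta}_N\}$ has entries in $[-1,1]$ by (\ref{cons_v4_theta_change}), its spectral norm satisfies $\|\widetilde{\bm{\Theta}}\|\le 1$, and submultiplicativity gives $\|\bm{X}\|\le\delta_k\|\bm{S}_k\|\,\|\widetilde{\bm{\Theta}}\|\le\delta_k\|\bm{S}_k\|$. The hypothesis $\delta_k\ll\|\bm{S}_k\|^{-1}$ then yields $\|\bm{X}\|\ll 1$, so the Neumann series converges and the discarded terms are of order $\left(\delta_k\|\bm{S}_k\|\right)^2$, negligible relative to the retained first-order term of order $\delta_k\|\bm{S}_k\|$.

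The main obstacle I anticipate is not any single computation but the careful bookkeeping of the phase factors $e^{\pm j\tau}$ through the factorization, and making the truncation argument uniform: one must verify that $\|\widetilde{\bm{\Theta}}\|\le 1$ holds over the \emph{entire} feasible set of (\ref{opt_v4_problem_HBF}), so that the second-order error bound, and hence the validity of (\ref{approx_HB}), is independent of the particular increment $\widetilde{\bm{\Theta}}$ selected in each iteration.
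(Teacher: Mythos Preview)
Your proposal is correct and follows essentially the same route as the paper's proof: factor the perturbed inverse as $(\bm{I}-\delta_k\bm{S}_k\widetilde{\bm{\Theta}})^{-1}e^{j\tau}\bm{S}_k$, apply the Neumann series, truncate at first order, and justify via $\|\delta_k\bm{S}_k\widetilde{\bm{\Theta}}\|\le\delta_k\|\bm{S}_k\|$ with $\|\widetilde{\bm{\Theta}}\|\le 1$. The paper additionally spells out the explicit remainder bound $\|(\bm{I}-\bm{X})^{-1}-(\bm{I}+\bm{X})\|\le\|\bm{X}\|^2/(1-\|\bm{X}\|)$, but this is exactly the second-order control you already sketched.
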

\begin{proof}
	See Appendix~\ref{app_approx_HB}.
\end{proof}

{It should be noted that the choice of operator within the Neumann series can influence the ultimate performance. There are two possible operators within the Neumann series to be selected. The first operator, as proposed above, is $ \bm{S}_k\Delta\widetilde{\bm{\Theta}} $, which corresponds to the case where the inverse of the polarizability, i.e., $\widetilde{\bm{\Theta}}_{k}$, is regarded as optimization variables and small perturbations are applied to them. The second operator is $ \bm{S}_k\bm{\Theta}_k^{-1}\bm{\Theta}_k^{-1}\Delta{\bm{\Theta}} $, which corresponds to the case where the polarizability ${\bm{\Theta}}_{k}$ are regarded as the optimization variables. Through sensitivity analysis, we can show that although the sensitivity of the objective function with respect to the optimization variables $\bm{\Theta}_{k-1}$ (i.e., when the second operator is selected) is larger than that with respect to $\widetilde{\bm{\Theta}}_{k-1}$ (i.e., when the first operator is selected), the maximum allowed step length $\|\Delta{\bm{\Theta}}\|^{max}$ is smaller than $\|\Delta\widetilde{\bm{\Theta}}\|^{max}$, indicating that the convergence speed corresponding to the two operators are comparable. However, the first operator can achieve better performance than the second one. This is mainly because unlike the first operator, when the second operator is selected, the perturbations should be quite small. Such small perturbations could result in significant numerical errors during the operations of the proposed algorithm due to finite word length effect, which can deteriorate the algorithm's performance.}

As indicated in Lemma~\ref{lemma_approx_HB}, the holographic beamformer can be approximated by a linear function of the normalized change $\widetilde{\bm{\Theta}}$ of the holographic pattern, based on which the holographic beamforming problem can be simplified, as shown in the following theorem.
\begin{theorem}
	\label{theorem_equivalent_HBF}
	Based on Lemma~(\ref{lemma_approx_HB}), the holographic beamforming subproblem in (\ref{opt_v4_problem_HBF}) is equivalently reformulated as
	{
		\setlength{\abovedisplayskip}{-6pt}
		\setlength{\belowdisplayskip}{3pt}
	\begin{subequations}\label{opt_v5_problem_HBF}
	\begin{align}
			\label{obj_v5_HBF}
			&\max_{\bm{\Xi},\tau}  \tau  \hfill\\
			s.t.~&-1\le\Tr(\bm{C}_n\Xi)\le 1,n=1,\dots,N,\hfill\\
			&\theta_{max}^{-1}\le(\theta_n)_{k-1}^{-1}-\delta_k\Tr(\bm{C}_n\Xi)\le \theta_{min}^{-1},n=1,\dots,N,\hfill\\
			&\Tr(\hat{\bm{U}}_{k,l}\Xi)+\left(2\Re(e_{k,l})+|\rho_l|^2(b_{k,l}-b_k-\sigma^2)\right)\ge \tau,\notag\\
			&\quad\quad\quad\quad\quad\quad\quad\quad\quad\quad\quad\quad\quad\quad l=1,\dots,L,\hfill\\
			&(\Tr(\widetilde{\bm{U}}_{k,w}\Xi)+b_{k,w})\le G_w^{th},w=1,\dots,W,\hfill\\
			&(\Tr(\widetilde{\bm{U}}_{k,d}\Xi)+ b_{k,d})-\gamma_d^l(\Tr(\widetilde{\bm{U}}_{k,1}\Xi)+ b_{k,1})\ge 0,\hfill\notag\\
			&\quad\quad\quad\quad\quad\quad\quad\quad\quad\quad\quad\quad\quad\quad d=2,\dots,D,\\
			&(\Tr(\widetilde{\bm{U}}_{k,d}\Xi)+ b_{k,d})-\gamma_d^u(\Tr(\widetilde{\bm{U}}_{k,1}\Xi)+ b_{k,1})\le 0,\notag\\
			&\quad\quad\quad\quad\quad\quad\quad\quad\quad\quad\quad\quad\quad\quad d=2,\dots,D,\hfill\\
			&(\Tr(\widetilde{\bm{U}}_{k,d}\Xi)+b_{k,d})\ge G_d^{th},d=1,\dots,D,\hfill\\
			&\frac{2^{R_l^{th}}}{2^{R_l^{th}}-1}(\Tr(\widetilde{\bm{U}}_{k,l}\Xi)+b_{k,l})-(\Tr(\widetilde{\bm{U}}_{k}\Xi)+b_{k})\ge \sigma^2,\notag\\
			&\quad\quad\quad\quad\quad\quad\quad\quad\quad\quad\quad\quad\quad\quad l=1,\dots,L,\hfill\\
			&\Xi=
			\begin{bmatrix}
			\newvec(\widetilde{\bm{\Theta}})\newvec(\widetilde{\bm{\Theta}})^{\mathsf{T}}	 & t\newvec(\widetilde{\bm{\Theta}}) \\
				t\newvec(\widetilde{\bm{\Theta}})^T & t^2
			\end{bmatrix}\\
			&t^2=1,\hfill
			\end{align}
\end{subequations}}
where the definitions of constant matrices $\bm{C}_n$, $\hat{\bm{U}}_{k,l}$, $\widetilde{\bm{U}}_{k,d}$, $\widetilde{\bm{U}}_{k,l}$, and $\widetilde{\bm{U}}_{k}$, and the definitions of constant scalar $e_{k,l}$, $b_{k,l}$, $b_k$ can be found in Appendix~\ref{appendix_equivalent_HBF}. 
\end{theorem}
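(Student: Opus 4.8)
The plan is to substitute the linear approximation of $\bm{B}_k$ from Lemma~\ref{lemma_approx_HB} into every constraint of problem (\ref{opt_v4_problem_HBF}) and then \emph{homogenize} the resulting quadratic polynomials in $\widetilde{\bm{\Theta}}$ into trace forms in the lifted matrix $\bm{\Xi}$. First I would write (\ref{approx_HB}) compactly as $\bm{B}_k\approx\bm{B}_{k,0}+\delta_k\bm{B}_{k,1}(\widetilde{\bm{\Theta}})$, where $\bm{B}_{k,0}=ke^{j\tau}\bm{S}_k\widetilde{\bm{F}}^{ref}$ is constant and $\bm{B}_{k,1}(\widetilde{\bm{\Theta}})=ke^{j\tau}\bm{S}_k\widetilde{\bm{\Theta}}\bm{S}_k\widetilde{\bm{F}}^{ref}$ is linear in $\widetilde{\bm{\Theta}}$. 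Every constraint of (\ref{opt_v4_problem_HBF}) is assembled from quadratic forms in $\bm{B}_k$ -- the beampattern gains $P(\theta,\phi)=\bm{a}^{\mathsf{T}}(\theta,\phi)\bm{B}_k\bm{Q}\bm{B}_k^{\mathsf{H}}\bm{a}^*(\theta,\phi)$ from (\ref{radiation_pattern}) together with the SINR numerator and received-power terms $\bm{h}_l^{\mathsf{T}}\bm{B}_k\bm{Q}_l\bm{B}_k^{\mathsf{H}}\bm{h}_l^*$ and $\bm{h}_l^{\mathsf{T}}\bm{B}_k\bm{Q}\bm{B}_k^{\mathsf{H}}\bm{h}_l^*$ -- so writing each such form as $\bm{w}^{\mathsf{H}}\bm{Q}\bm{w}$ with $\bm{w}=\bm{B}_k^{\mathsf{H}}\bm{a}^*$ (or $\bm{w}=\bm{B}_k^{\mathsf{H}}\bm{h}_l^*$) and splitting $\bm{w}=\bm{w}_0+\delta_k\bm{w}_1(\widetilde{\bm{\Theta}})$ expands it into exactly three pieces: a constant term $\bm{w}_0^{\mathsf{H}}\bm{Q}\bm{w}_0$, a term $2\delta_k\Re(\bm{w}_1^{\mathsf{H}}\bm{Q}\bm{w}_0)$ linear in $\widetilde{\bm{\Theta}}$, and a term $\delta_k^2\bm{w}_1^{\mathsf{H}}\bm{Q}\bm{w}_1$ quadratic in $\widetilde{\bm{\Theta}}$.

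Next I would collapse the matrix dependence onto $\newvec(\widetilde{\bm{\Theta}})$ using the diagonal structure of $\widetilde{\bm{\Theta}}$. Applying the vectorization identity $\newvec(\bm{A}\widetilde{\bm{\Theta}}\bm{C})=(\bm{C}^{\mathsf{T}}\otimes\bm{A})\newvec(\widetilde{\bm{\Theta}})$ to $\bm{w}_1$ expresses each linear and quadratic coefficient through $\newvec(\widetilde{\bm{\Theta}})$, so that a generic gain takes the form $b+2\delta_k\Re(\bm{q}^{\mathsf{H}}\newvec(\widetilde{\bm{\Theta}}))+\delta_k^2\newvec(\widetilde{\bm{\Theta}})^{\mathsf{H}}\bm{P}\newvec(\widetilde{\bm{\Theta}})$ for a constant scalar $b$, vector $\bm{q}$, and Hermitian matrix $\bm{P}$ built from $\bm{S}_k$, $\widetilde{\bm{F}}^{ref}$, $\bm{Q}$ (or $\bm{Q}_l$) and the relevant steering or channel vector. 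Introducing the scalar $t$ with $t^2=1$ and the lifted matrix $\bm{\Xi}$ of the theorem statement, the quadratic piece equals a trace against the top-left block $\newvec(\widetilde{\bm{\Theta}})\newvec(\widetilde{\bm{\Theta}})^{\mathsf{T}}$ of $\bm{\Xi}$, while the linear piece is rewritten as $2\delta_k\Re(\bm{q}^{\mathsf{H}}\,t\,\newvec(\widetilde{\bm{\Theta}}))\cdot t$ using $t^2=1$ and thus maps to a trace against the off-diagonal blocks $t\,\newvec(\widetilde{\bm{\Theta}})$ of $\bm{\Xi}$. Stacking the three blocks identifies each gain with $\Tr(\widetilde{\bm{U}}\bm{\Xi})+b$ for a constant Hermitian matrix $\widetilde{\bm{U}}$, which I would match term-by-term against the constraints to read off $\widetilde{\bm{U}}_{k,d}$, $\widetilde{\bm{U}}_{k,l}$, $\widetilde{\bm{U}}_k$, $\hat{\bm{U}}_{k,l}$ and the scalars $b_{k,d}$, $b_{k,l}$, $b_k$, $e_{k,l}$. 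The box constraints (\ref{cons_v4_theta_change}) and (\ref{cons_v4_theta_range}), which involve only the individual $\widetilde{\theta}_n$, are then handled by choosing $\bm{C}_n$ to select the $n$-th entry of the off-diagonal block, so that $\Tr(\bm{C}_n\bm{\Xi})=t\widetilde{\theta}_n$, whose value equals $\widetilde{\theta}_n$ up to the sign of $t$ since $t^2=1$.

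The main obstacle will be the faithful bookkeeping of this decomposition: the several distinct quadratic forms (clutterer and target beampattern gains, the SINR numerator, and the full received power) must each be split into constant, linear, and quadratic parts and assembled into the correct constant matrix and scalar without double-counting, all while respecting the real-part (Hermitian) structure introduced by the cross terms. The most delicate conceptual step is the homogenization itself: I must verify that folding the linear terms into the off-diagonal block of $\bm{\Xi}$ through $t^2=1$ reproduces the original linear terms exactly and without sign ambiguity, so that (\ref{opt_v5_problem_HBF}) is a genuine \emph{equivalent} reformulation of (\ref{opt_v4_problem_HBF}) rather than a relaxation. The explicit rank-one outer-product structure of $\bm{\Xi}$ together with $t^2=1$ supplies a bijection between feasible $\widetilde{\bm{\Theta}}$ and feasible $\bm{\Xi}$ (up to the immaterial sign of $t$), which secures the equivalence; the subsequent semidefinite relaxation is then deferred to dropping that rank-one structure.
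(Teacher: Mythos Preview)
Your proposal is correct and follows essentially the same route as the paper: substitute the affine approximation (\ref{approx_HB}) into each quadratic form, split into constant, linear, and quadratic pieces in the diagonal of $\widetilde{\bm{\Theta}}$, and then homogenize with the auxiliary scalar $t$ to obtain the lifted trace constraints in $\bm{\Xi}$. The paper does not spell out these steps either; Appendix~\ref{appendix_equivalent_HBF} simply records the resulting constant matrices and scalars, and your plan would reproduce them. The only cosmetic difference is that the paper expresses the quadratic coefficient matrices directly via Hadamard products (exploiting the identity $\bm{p}^{\mathsf{T}}\widetilde{\bm{\Theta}}\bm{M}\widetilde{\bm{\Theta}}\bm{q}=\bm{\theta}^{\mathsf{T}}\bigl((\bm{p}\bm{q}^{\mathsf{T}})\circ\bm{M}\bigr)\bm{\theta}$ for $\widetilde{\bm{\Theta}}=\diag(\bm{\theta})$), whereas you route through the Kronecker vectorization identity and then collapse using diagonality; for diagonal $\widetilde{\bm{\Theta}}$ these are the same computation, so when you ``match term-by-term'' you will land on exactly the Hadamard-product expressions of Appendix~\ref{appendix_equivalent_HBF}.
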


The problem in (\ref{opt_v5_problem_HBF}) is a quadratic program, which can be effectively solved using the SDR method~\cite{Luo_SDR_2010}. The overall algorithm is summarized in Algorithm~\ref{algorithm_holographic_BF}.


\begin{algorithm}[!tpb]
	\caption{Holographic beamforming design}
	\label{algorithm_holographic_BF} 
	\begin{algorithmic}[1]
		\REQUIRE Digital beamformer $\bm{V}$
		\REPEAT
		\STATE Given $\bm{\Theta}$, compute $\{\rho_l^{opt}\}_l$ for all $L$ users according to (\ref{rho_opt_HBF})
		\STATE Set $k=1$
		\REPEAT
		\STATE Set maximum step length $\delta_k$
		\STATE Given $\{\rho_l\}_l$, optimize $\widetilde{\bm{\Theta}}$ by solving problem (\ref{opt_v5_problem_HBF}) through the SDR method
		\STATE Update $\bm{\Theta}_k$ according to (\ref{delta_Theta})
		\STATE Set $k=k+1$
		\UNTIL{Convergence}
		\UNTIL{Convergence}
		\ENSURE{The optimal holographic pattern $\bm{\Theta}^{*}$}
	\end{algorithmic}
\end{algorithm}

\subsection{Overall Algorithm Description}
Based on the algorithms presented in the previous two subsections, we design a joint mutual coupling aware sum-rate optimization algorithm to solve problem (\ref{opt_problem}) iteratively. Specifically, in each iteration, the digital beamformer is first optimized by solving problem (\ref{opt_problem_DBF}), given the holographic pattern $\bm{\Theta}$ at the RHS (i.e., the amplitude of polarizability of the equivalent magnetic dipoles). Then, given the digital beamformer $\bm{V}$, the holographic pattern is optimized through Algorithm~\ref{algorithm_holographic_BF}. The iterations are performed until the value difference of the minimum data rate among all users between two adjacent iterations is less than a predefined threshold. The overall algorithm is summarized in Algorithm~\ref{algorithm_all}.

\begin{algorithm}[!tpb]
	\caption{Mutual coupling aware joint optimization algorithm design}
	\label{algorithm_all}
	\begin{algorithmic}[1] 
		\STATE Initialize digital beamformer $\bm{V}$ and holographic pattern $\bm{\Theta}$
		\STATE Set $o=1$		
		\REPEAT
		\STATE Given~$\bm{\Theta}^{(o-1)}$, derive~$\bm{V}^{(o)}$ by solving problem~(\ref{opt_problem_DBF_v6});
		\STATE Given $\bm{V}^{(o)}$, calculate $\bm{\Theta}^{(o)}$ by solving problem (\ref{opt_problem_HBF}) using Algorithm~\ref{algorithm_holographic_BF};
		\STATE Update $o=o+1$
		\UNTIL{Convergence}		
		\ENSURE Optimal digital beamformer $\bm{V}^{opt}$ and optimal holographic pattern $\bm{\Theta}^{opt}$.
	\end{algorithmic}
\end{algorithm}
\vspace{-.4cm}
\subsection{Convergence and Complexity Analysis}

\subsubsection{Convergence} In the $o$-th iteration, the sum-rate is non-decreasing after the digital beamformer $\bm{V}$ is optimized given the holographic pattern $\bm{\Theta}^{(o-1)}$, i.e.,
\begin{align}
	\label{non_decreasing_1}
	R_{sum}(\bm{V}^{(o)},\bm{\Theta}^{(o-1)})\ge R_{sum}(\bm{V}^{(o-1)},\bm{\Theta}^{(o-1)}).
\end{align}
Then, given $\bm{V}^{(o)}$, the holographic pattern is optimized through Algorithm~\ref{algorithm_holographic_BF}, such that we have
\begin{align}
	\label{non_decreasing_2}
	R_{sum}(\bm{V}^{(o)},\bm{\Theta}^{(o)})\ge R_{sum}(\bm{V}^{(o)},\bm{\Theta}^{(o-1)}).
\end{align}
Based on (\ref{non_decreasing_1}) and (\ref{non_decreasing_2}), we can obtain that
\begin{align}
	R_{sum}(\bm{V}^{(o)},\bm{\Theta}^{(o)})\ge R_{sum}(\bm{V}^{(o-1)},\bm{\Theta}^{(o-1)}),
\end{align}
which indicates that in each update step of the proposed algorithm, the objective function value (i.e., minimum data rate) is non-decreasing. Since the objective function value sequence obtained in the iteration steps is monotonic and also bounded, the overall algorithm is guaranteed to converge.

\subsubsection{Complexity} We evaluate the complexity of the proposed algorithms for each of the two subproblems individually and subsequently outline the complexity of the overall optimization algorithm.

In the digital beamforming subproblem, in each iteration, the complexity for calculating $\{\rho_l\}$ is $\mathcal{O}(L)$. Further, the complexity of SDR for solving problem (\ref{opt_problem_DBF_v6}) is $\mathcal{O}(K^{4.5}\log(1/\xi))$, where $\xi$ is the solution accuracy of the SDR algorithm. Denote the number of overall iterations by $I_D$. Then, the computation complexity for the digital beamforming algorithm is $\mathcal{O}(I_D(L+K^{4.5}\log(1/\xi)))$.

For the holographic beamforming subproblem, denote the number of outer and inner iterations by $I_H^{(out)}$ and $I_H^{(in)}$. In each inner iteration, the complexity for calculating auxiliary variable $\{\rho_l\}$ is $\mathcal{O}(L)$. Then, problem (\ref{opt_v5_problem_HBF}) is solved using the SDR technique, with complexity of $\mathcal{O}(N^{4.5}\log(1/\xi))$. Therefore, the complexity for the overall holographic beamforming algorithm is $\mathcal{O}(I_H^{(out)}I_H^{(in)}(L+N^{4.5}\log(1/\xi)))$.

To acquire the complexity for overall holographic ISAC algorithm in Algorithm~\ref{algorithm_all}, we first derive the number of iterations for optimizing the digital beamformer and holographic pattern alternatively. Note that the sum rate of the $L$ mobile users is upper bounded by $L\log_2(1+\frac{P}{L\sigma^2})$. Therefore, the minimum data rate among these users cannot exceed $\log_2(1+\frac{P}{L\sigma^2})$. During the operation of the holographic ISAC algorithm, the change of the objective function in each iteration does not exceed the termination threshold $\epsilon$. Therefore, the number of overall iterations does not exceed $\log_2(1+\frac{P}{L\sigma^2})/\epsilon$. Consequently, the computation complexity for Algorithm~\ref{algorithm_all} is $\mathcal{O}((\log_2(1+\frac{P}{L\sigma^2})/\epsilon)(I_D(L+K^{4.5}\log(1/\xi))+I_H^{(out)}I_H^{(in)}(L+N^{4.5}\log(1/\xi))))=\mathcal{O}((1/L)(I_DK^{4.5}+I_H^{(out)}I_H^{(in)}N^{4.5})\log(1/\xi)+I_D+I_H^{(out)}I_H^{(in)})$.

\begin{figure}[!t]
	\centering
	\includegraphics[width=0.3\textwidth]{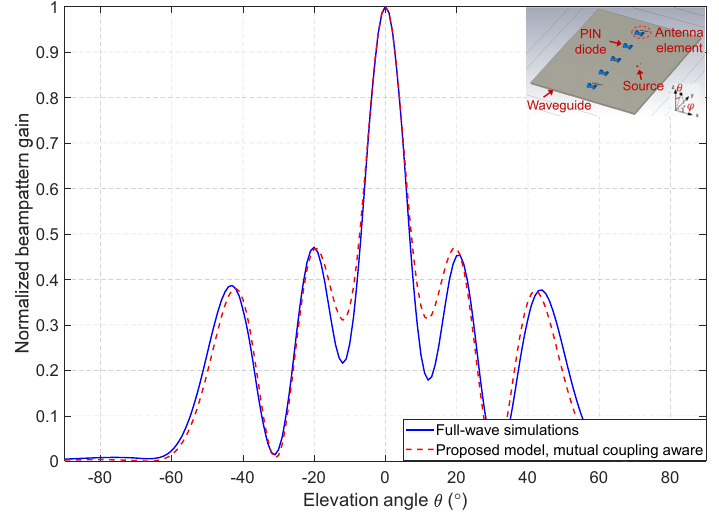}
	\caption{Validation of the accuracy of the proposed mutual coupling aware model through full-wave simulations. $5$ practical cELC-based RHS elements are etched on a parallel-plate waveguide, and the beampattern is plotted given $\varphi=90^\circ$.}
	\vspace{-5mm}
	\label{radiation_pattern_cst}
\end{figure}

{
\begin{figure}[!t]
	\centering
	\includegraphics[width=0.39\textwidth]{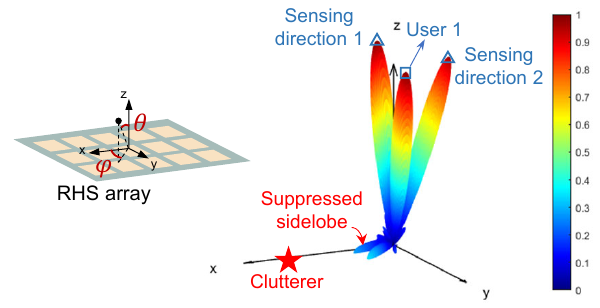}
	\caption{Beampattern of an RHS array obtained using the proposed mutual coupling aware holographic beamforming algorithm. $L=1$ mobile user and $D=2$ sensing directions are considered. The number of RHS elements is set as $N=20^2$.}
		\vspace{-5mm}
	\label{fig_beampattern}
\end{figure}}

\begin{figure}[!t]
	\centering
	\includegraphics[width=0.39\textwidth]{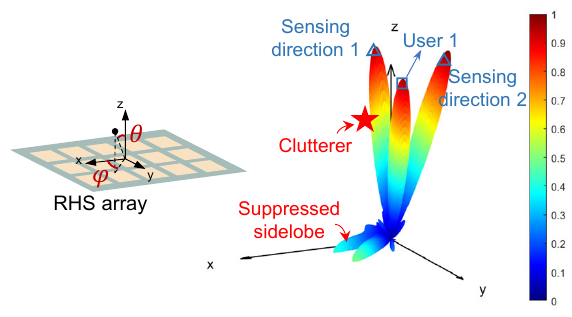}
	\caption{Beampattern of an RHS array obtained using the proposed mutual coupling aware holographic beamforming algorithm. The direction of clutterer is $(20^\circ,14^\circ)$. The number of RHS elements is set as $N=20^2$.}
		\vspace{-5mm}
	\label{fig_beampattern_misaligned_clutter}
\end{figure}
\vspace{-.2cm}
\section{Simulation Results}
\label{sec_simulation}
\vspace{-.2cm}
In this section, we evaluate the performance of the proposed mutual coupling aware holographic ISAC algorithm. For simplicity, we consider line-of-sight propagations between the BS and each user. Major simulation parameters are set up based on and existing works~\cite{Yoo_Sub_2023,Mancera_analytical,Zeng_coverage_2021,Zeng_both_2021,Haobo_HISAC_2022}. Specifically, the center frequency is set to $f=30$~GHz\footnote{Compared with the sub-6G frequency band, the mutual coupling effect is more significant in the mmWave band~\cite{Balanis_antenna_the}. Therefore, to demonstrate the necessity of applying the proposed mutual coupling aware algorithm, we adopt a typical mmWave working frequency here, i.e., $30$~GHz.}, with the corresponding wavelength given by $\lambda=1$~cm. Transmit power of the BS is set as $P=43$~dBm, and noise variance is $\sigma^2=-96$~dBm. Element spacing of the RHS is $0.23\lambda$. Further, the feeds of the RHS are deployed on one side of the surface aligned with the $z$-axis. The separation of the feeds are also $0.23\lambda$. We adopt the reference wave model given in~\cite{Deng_RHS_multi_user_2022}, i.e., the reference wave at the $n$-th RHS element generated by the $l$-th feed is given by $\widetilde{\bm{f}}_{l}^{ref}(n)=\exp(-j\bm{k}_s\cdot \bm{r}_{l,n})$. Here, $\bm{k}_s$ is the propagation vector of the reference wave within the RHS, which is set as $\bm{k}_s=\sqrt{3}\frac{2\pi}{\lambda}$. Further, $\bm{r}_{l,n}$ is the distance vector from the $l$-th feed to the $n$-th RHS element. The Green's function $\bm{G}$ is modeled based on the theoretical model in~\cite{Mancera_analytical}, which considers the mutual coupling among elements via guided and radiated fields. We assume that each UE is equipped with $J=1$ antenna elements, with an element spacing of half wavelength. $L = 1$ communication users and $D = 2$ sensing targets are considered in Fig.~\ref{fig_beampattern}~Fig.~\ref{fig_beampattern_all_v9} while $L = 4$ communication users and $D = 2$ sensing targets are considered in Fig.~\ref{fig_beampattern_vs_iteration_rounds}~Fig.~\ref{fig_convergence_rate}. $W=1$ environmental clutterer is considered. The upper and lower bound for the beampattern gains are set as $\gamma_d^l=0.9$ and $\gamma_d^u=1.1$ for $d=1,\dots,D$. The minimum required data rate is set as $1$~bps/Hz for all communication users.

{To demonstrate that the proposed model can capture the mutual coupling effects, we compare the radiation pattern acquired through the proposed model against that through full-wave numerical simulations, as shown in Fig.~\ref{radiation_pattern_cst}. Here, we plot the beampattern of an antenna array given the azimuth angle $\varphi=90^\circ$. In the considered antenna array, $5$ practical cELC-based RHS elements (with the same structure as that in~\cite{Deng_VTM_2023}) are etched on a parallel-plate waveguide. Further, an electric dipole is deployed within the waveguide as excitation source, where the dipole is placed vertical to the waveguide surface. The frequency is set as $27.1$~GHz. Each RHS elements are equipped with two PIN diodes to control its radiation amplitude (i.e., the polarizability magnitude in the context of coupled dipole models). For simplicity of discussions, we set all RHS elements to the ``ON" state, i.e., the RHS elements are resonant and can radiate power into free space. However, the radiating fields of the RHS elements can also propagate to the other RHS elements, resulting in mutual coupling. In Fig.~\ref{radiation_pattern_cst}, the full-wave simulations are accomplished using CST Microwave Suite. From Fig.~\ref{radiation_pattern_cst}, we can see that the proposed mutual coupling aware model can achieve similar radiation pattern to that acquired through full-wave simulations, which, thus demonstrates that the proposed model can capture the mutual-coupling effects.}

Fig.~\ref{fig_beampattern} illustrates the beampattern generated by the RHS using the proposed mutual coupling aware holographic ISAC method. $L=1$ communication users and $D=2$ angles of interest for sensing are considered, with their directions $(\theta,\phi)$ given by $(20^\circ,80^\circ)$, $(20^\circ,260^\circ)$, and $(20^\circ,170^\circ)$, respectively\footnote{When the angle difference between entities decreases, both communication and sensing performance can degrade.}. We assume there is $W=1$ environmental clutterer, with its direction given by $(90^\circ,14^\circ)$. Further, the maximum and minimum polarizability magnitudes are set as $\theta_{max}=3.01\times10^{-6}$ and $\theta_{min}=3.01\times10^{-8}$, respectively. A comparison with Fig.~\ref{Superposability}, which shows the radiation pattern of the RHS under the existing holographic ISAC method designed without considering mutual coupling~\cite{Haobo_HISAC_2022}, reveals that the proposed algorithm adapts better to ISAC systems in clutter environments. This is because the sidelobe level reduces from $-1.88$~dB to $-4.57$~dB, indicating lower echo signal power from the environmental clutterers and less interference to sensing procedures. {The beamwidth of the beams in directions $(\theta,\varphi)=(20^\circ,80^\circ), (20^\circ,170^\circ), (20^\circ,260^\circ)$ are $15.32^\circ$, $12.45^\circ$, and $12.42^\circ$, respectively, which can satisfy communication and sensing requirements.} 

{Based on the simulation settings in Fig.~\ref{fig_beampattern}, we further change the clutterer angle to $\theta=20^{\circ}$, and present the corresponding radiation pattern of the RHS in Fig.~\ref{fig_beampattern_misaligned_clutter}. As observed, the sidelobe level is $-3.06$~dB, and thus the sidelobe reduction effect of the proposed algorithm becomes relatively smaller compared to that in Fig.~\ref{fig_beampattern}. This is because the direction of the unexpected sidelobe, which arises from ignoring mutual coupling in holographic beamforming, is mainly determined by the locations of the communication users and sensing targets. Therefore, given the locations of the communication users and sensing targets, the direction of the sidelobe cannot change freely and cannot remain aligned with the clutterer when the location of the clutterer changes. Consequently, when the clutterer moves to a new direction $\theta=20^{\circ}$, the clutterer is no longer aligned with the sidelobe, as shown in Fig.~\ref{fig_beampattern_misaligned_clutter}. Consequently, the sidelobe has a reduced impact on sensing procedures, and thus the sidelobe reduction achieved by the proposed algorithm becomes relatively smaller.}


\begin{figure}[!tpb]
	\centering
	
	\subfigure[Beampattern of an RHS array without mutual coupling awareness]{
		\begin{minipage}[b]{0.4\textwidth}
			\centering
			\includegraphics[width=.95\textwidth]{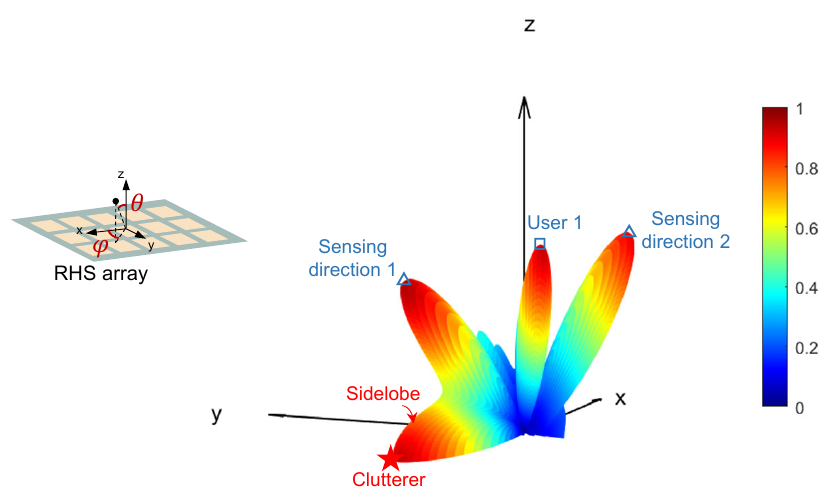}
			\label{Superposability_v3}
	\end{minipage}}

	\subfigure[Beampattern of an RHS array using the proposed mutual coupling aware algorithm]{
		\begin{minipage}[b]{0.4\textwidth}
			\centering
						\vspace{-0.4cm}
			\includegraphics[width=.95\textwidth]{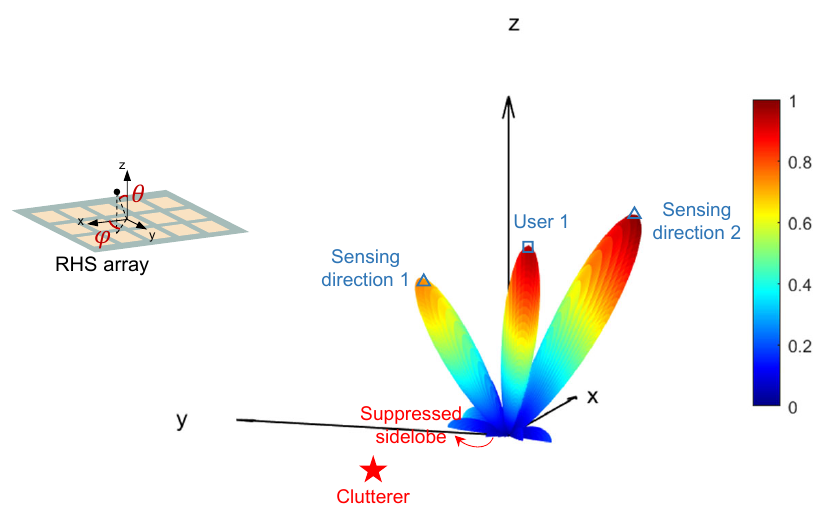}
						\vspace{-0.2cm}
			\label{fig_beampattern_v4}
	\end{minipage}}
	\caption{Evaluation of sidelobe suppresion performance of the proposed mutual coupling aware method when the mobile users, targets and clutterers are spatially close. One mobile user, two sensing directions, and one environmental clutterer are considered, with their directions given by $(\theta,\phi)=(50^\circ,195^\circ)$, $(50^\circ,165^\circ)$, $(50^\circ,225^\circ)$, and $(90^\circ,169^\circ)$, respectively. There are $N=20^2$ RHS elements.}
	\label{fig_beampattern_all_v8}
		\vspace{-0.5cm}
\end{figure}

%

{
In Fig.~\ref{fig_beampattern_all_v8}, we extend discussions to the case where the mobile users, targets and clutterers are spatially close, and show the radiation pattern of the RHS when the mutual coupling effect is not considered and is taken into account, respectively. The azimuth angles of the mobile user, the two sensing targets, and the clutterer are given by $\phi=195^\circ, 165^\circ, 225^\circ, 169^\circ$, respectively. The elevation angles for the mobile user and the sensing targets are set as $\theta=50^\circ$ while that for the clutterer is $\theta=90^\circ$. By comparing Fig.~\ref{Superposability_v3} and Fig.~\ref{fig_beampattern_v4}, we can see that when the mutual coupling effect is considered during holographic beamforming, the sidelobe of the RHS can be effectively suppressed, from $-0.08$~dB to $-17.24$~dB, which thus demonstrates the effectiveness of the proposed algorithm. Further, compared with Fig.~\ref{fig_beampattern}, we can see that when the angle separation between the clutterer and the sensing target/communication user becomes smaller, the sidelobe suppression effect becomes more evident. This is mainly because a smaller separation results in stronger power leakage from the main lobe into the sidelobe, thereby increasing the sidelobe level. Additionally, the array response in the directions of the main lobe and that for the sidelobe become more correlated, which further amplifies the sidelobe level. Consequently, the proposed mutual coupling aware algorithm achieves more effective sidelobe suppression in such scenarios.}

\begin{figure}[!tpb]
	\centering
	
	\subfigure[Beampattern of an RHS array without mutual coupling awareness]{
		\begin{minipage}[b]{0.42\textwidth}
			\centering
			\includegraphics[width=.95\textwidth]{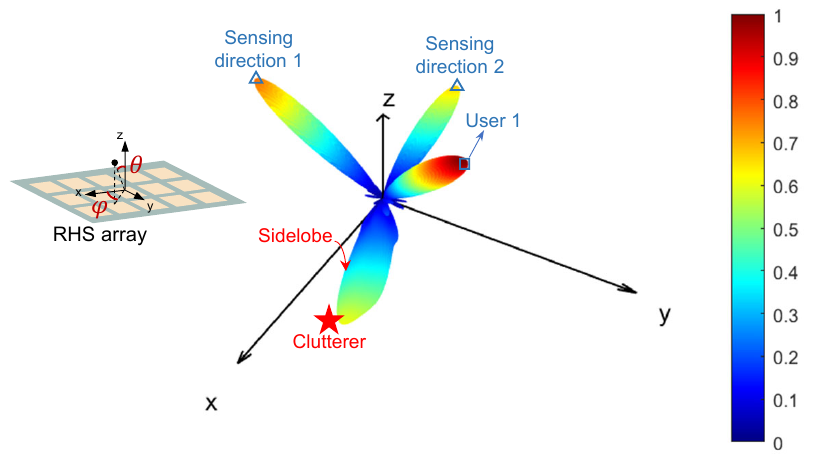}
			\label{Superposability_v4}
	\end{minipage}}
	
	\subfigure[Beampattern of an RHS array using the proposed mutual coupling aware algorithm]{
		\begin{minipage}[b]{0.41\textwidth}
			\centering
						\vspace{-0.2cm}
			\includegraphics[width=.95\textwidth]{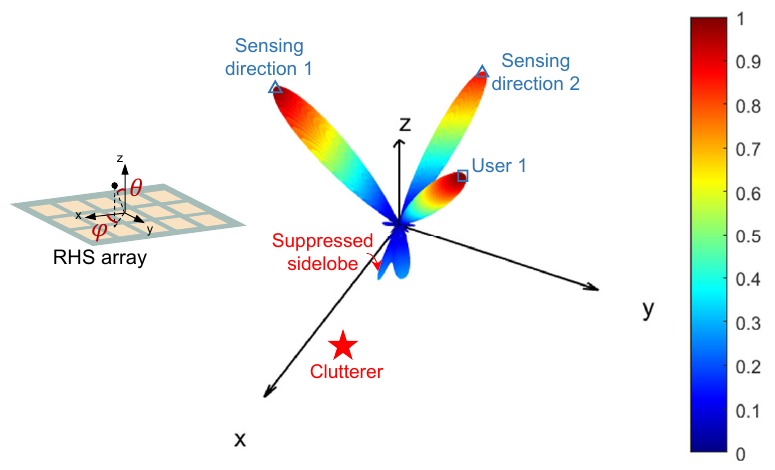}
						\vspace{-0.2cm}
			\label{fig_beampattern_v5}
	\end{minipage}}
	\caption{Evaluation of sidelobe suppresion performance of the proposed mutual coupling aware method when the elevation angles $\theta$ of the mobile users, targets and clutterers are different. One mobile user, two sensing directions, and one environmental clutterer are considered, with their directions given by $(\theta,\phi)=(20^\circ,80^\circ)$, $(30^\circ,170^\circ)$, $(40^\circ,260^\circ)$, and $(90^\circ,14^\circ)$, respectively. The number of RHS elements is set as $N=20^2$.}
	\label{fig_beampattern_all_v9}
	\vspace{-0.5cm}
\end{figure}

%

{
In addition, we also evaluate the sidelobe interference suppression performance of the proposed method when the elevation angles of the mobile users and sensing targets are different, as shown in Fig.~\ref{Superposability_v4} and Fig.~\ref{fig_beampattern_v5}. Here, the directions of the mobile user, the two sensing targets, and the clutterer are set as $(\theta,\phi)=(20^\circ,80^\circ)$, $(30^\circ,170^\circ)$, $(40^\circ,260^\circ)$, and $(90^\circ,14^\circ)$, respectively. By comparing Fig.~\ref{fig_beampattern_v5} against Fig.~\ref{Superposability_v4}, we can see that by adopting the proposed mutual-coupling aware holographic beamforming method, the generated sidelobe can be effectively suppressed, which indicates the robustness of the proposed method.}




\begin{figure}[!tpb]
	\centering
	\subfigure[Beampattern gain in the direction of the clutterer vs. iteration rounds]{
			\begin{minipage}[b]{0.43\textwidth}
					\centering
					\includegraphics[width=.85\textwidth]{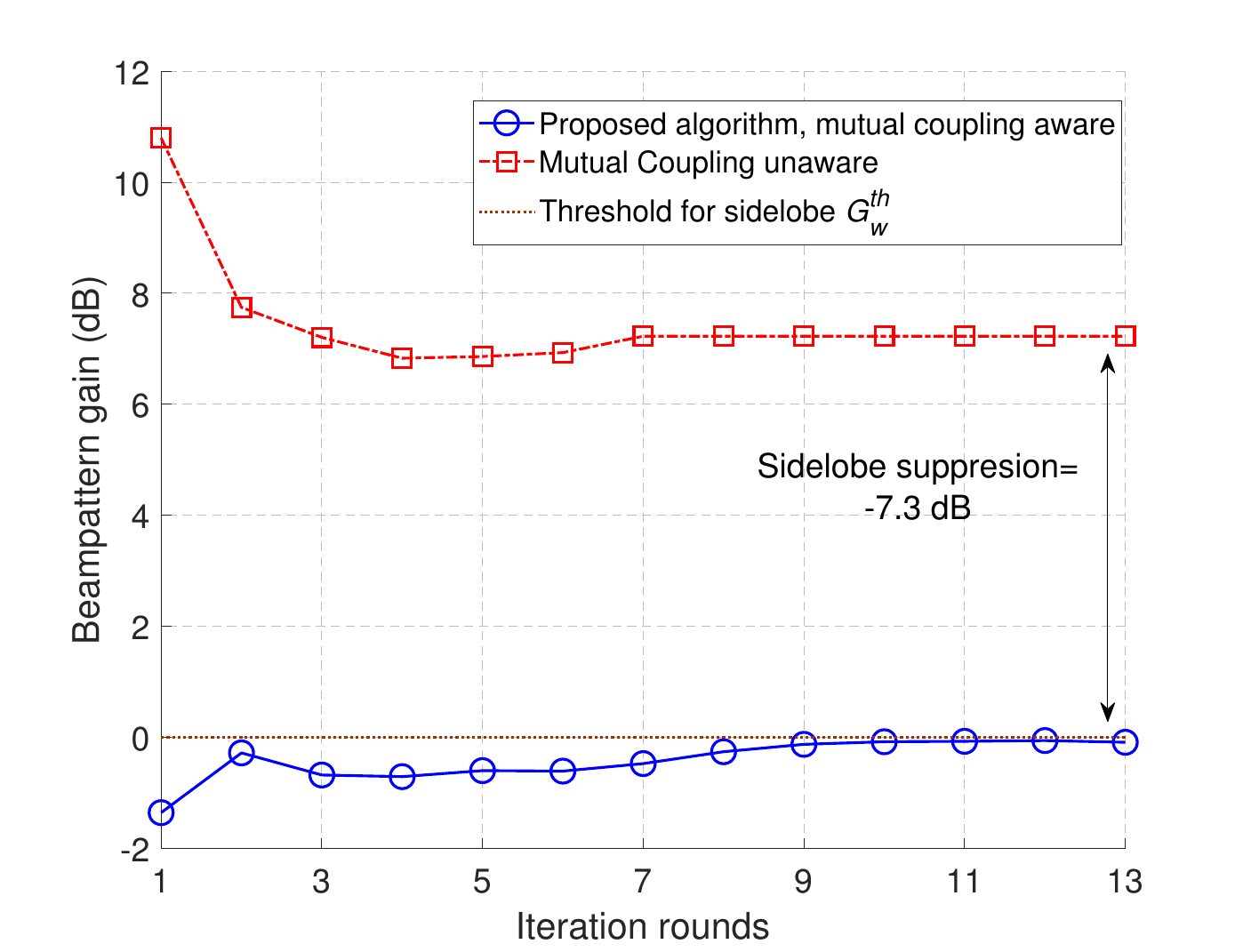}
							\vspace{-0.2cm}
					\label{Beampattern_gain_vs_iteration_rounds_sidelobe}
			\end{minipage}}
		
	\subfigure[Beampattern gain in the direction of sensing targets vs. iteration rounds]{
			\begin{minipage}[b]{0.43\textwidth}
					\centering
					\vspace{-0.3cm}
					\includegraphics[width=.85\textwidth]{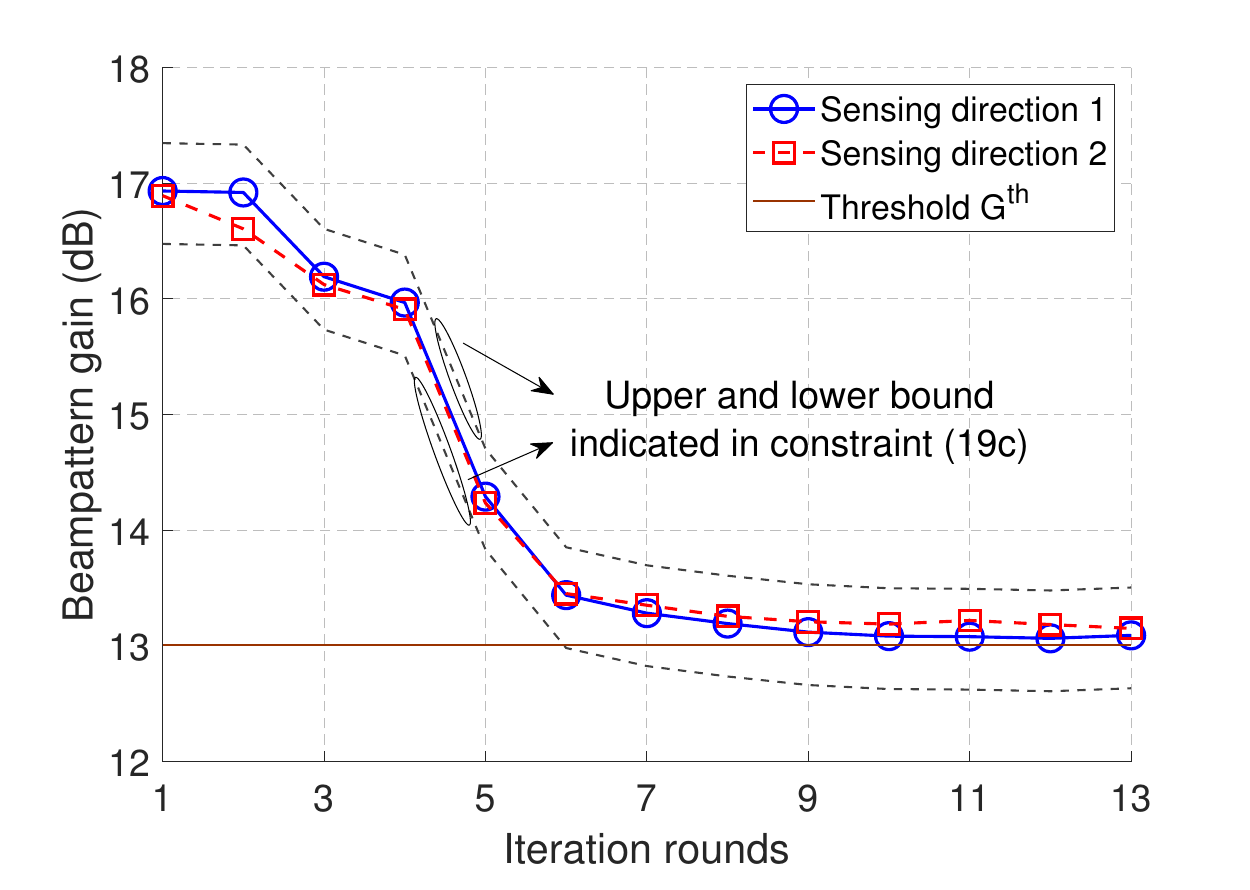}
							\vspace{-0.2cm}
					\label{Beampattern_gain_vs_iteration_rounds}
			\end{minipage}}
	\caption{Beampattern gain versus iteration rounds, with the angle between the clutterer and the closest sensing target equal to $20.2^\circ$. The scenario considers $T = 4$ RF chains. The beampattern gain thresholds are set as $G^{\mathrm{th}} = 20$ for the sensing directions and $G^{\mathrm{th}}_{\mathrm{sb}} = 1$ for the clutterer direction.}
	\label{fig_beampattern_vs_iteration_rounds}
	\vspace{-0.6cm}
\end{figure}

Further, in Fig.~\ref{fig_beampattern_vs_iteration_rounds}, we plot the change of the beampattern gains with iteration rounds. According to Fig.~\ref{Beampattern_gain_vs_iteration_rounds_sidelobe}, we can see that when for the proposed holographic beamforming algorithm, where mutual coupling effects are considered, the sidelobe level, i.e., the beampattern gain in the direction of the clutterer, always satisfies the maximum allowed sidelobe constraint. In contrast, when the mutual coupling effects are ignored, the holographic beamforming algorithm cannot effectively suppress sidelobes as the iteration goes on, with a sidelobe levels $7.3$~dB worse than the proposed algorithm when iteration terminates. This further demonstrates the superiority of the proposed mutual coupling algorithm over existing mutual coupling unaware ones. 

{From Fig.~\ref{Beampattern_gain_vs_iteration_rounds}, we can see that as iterations continue, the beampattern gains in the direction of the sensing targets tend to decrease. This is because we aim to maximize the communication rate of the users while guaranteeing the beampattern gains in the directions of sensing target above predetermined thresholds. Note that there is a tradeoff between the communication and sensing performance in the ISAC system. Therefore, as the iteration proceeds, enhancing the communication rate requires compromising the sensing performance—specifically, by reducing the beampattern gain in the sensing directions to approach, yet remain above, the required threshold, which can be realized by jointly designing the digital and holographic beamformers\footnote{Although the beampattern gains in the sensing directions decrease with the number of iteration rounds, the sensing performance can still be guaranteed, as shown in Fig.~\ref{Beampattern_gain_vs_iteration_rounds_sidelobe} and Fig.~\ref{Beampattern_gain_vs_iteration_rounds}.}} Further, it can be observed that the beampattern gains in the two targeted sensing angles always fall within the upper and lower bound specified in constraint (\ref{cons_fair_sensing}) and satisfy the minimum beampattern gain requirement $G^{th}$. This is because according to the proposed holographic ISAC algorithm given in Algorithm~\ref{algorithm_all}, in each iteration, the digital beamformer and holographic pattern are solved to satisfy these constraints.

\begin{figure}[!tpb]
	\centering
	\subfigure[Beampattern gain in the direction of the clutterer vs. iteration rounds]{
		\begin{minipage}[b]{0.43\textwidth}
			\centering
			\includegraphics[width=.85\textwidth]{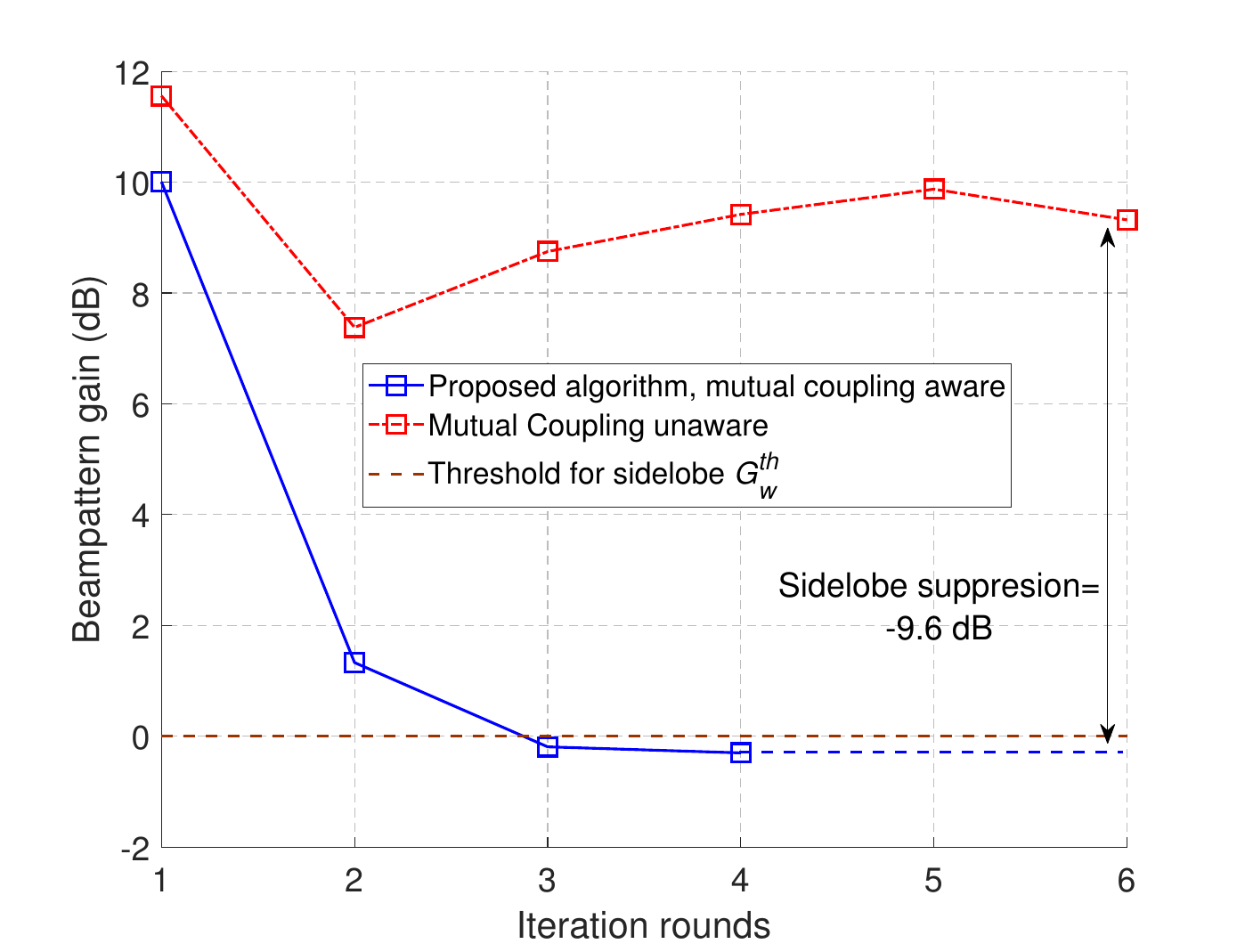}
					\vspace{-0.3cm}
			\label{Beampattern_gain_vs_iteration_rounds_sidelobe_close_clutter}
	\end{minipage}}
	
	\subfigure[Beampattern gain in the direction of sensing targets vs. iteration rounds]{
		\begin{minipage}[b]{0.43\textwidth}
			\centering
						\vspace{-0.2cm}
			\includegraphics[width=.85\textwidth]{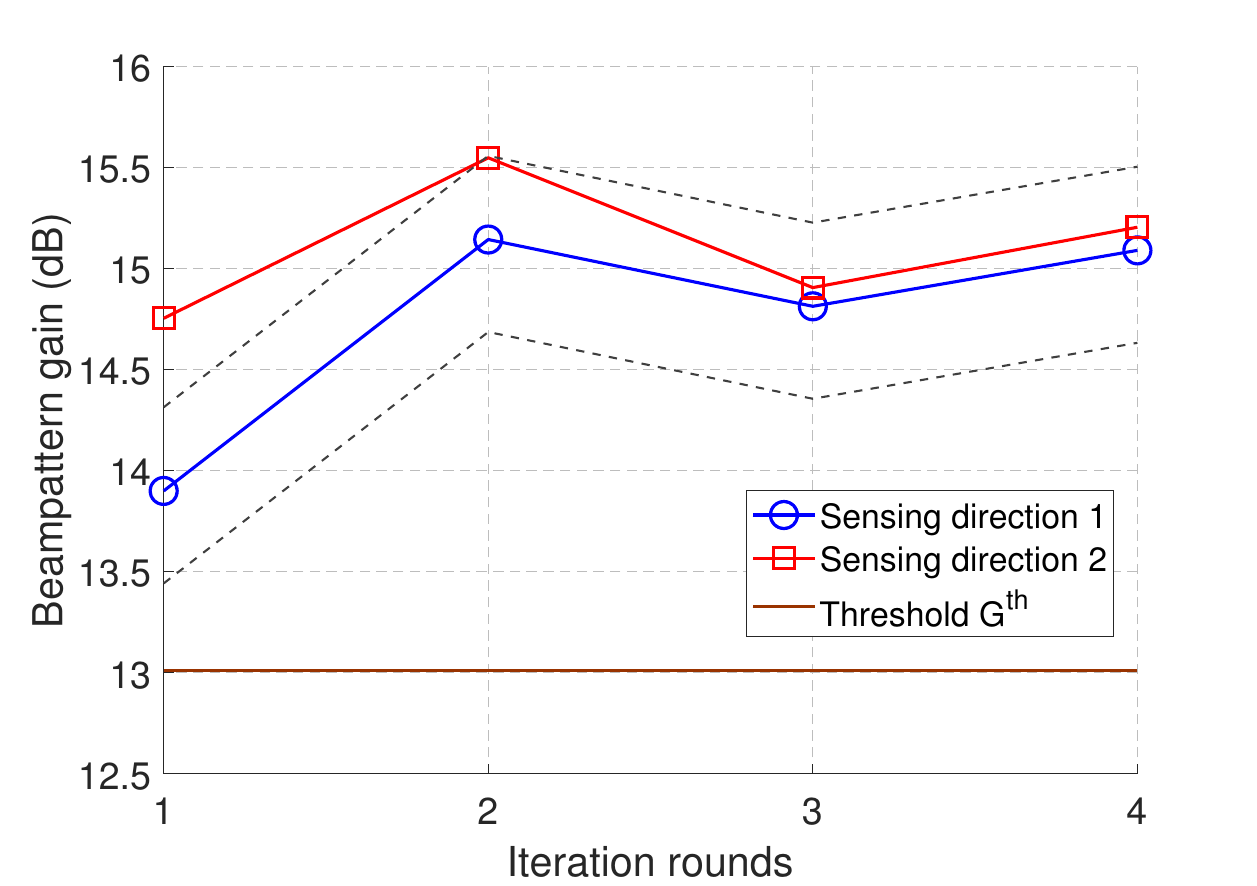}
					\vspace{-0.2cm}
			\label{Beampattern_gain_vs_iteration_rounds_close_clutter}
	\end{minipage}}
	\caption{Beampattern gain versus iteration rounds, with the angle between the clutterer and the closest sensing target equal to $10.6^\circ$. The scenario considers $T=4$ RF chains. The threshold for the beampattern gain in the sensing directions $G^{th}=20$, and that in the clutterer direction $G^{th}_{sb}=1$.}
	\label{fig_beampattern_vs_iteration_rounds_close_clutter}
	\vspace{-0.6cm}
\end{figure}

{Further, in Fig.~\ref{fig_beampattern_vs_iteration_rounds_close_clutter}, we investigate the beampattern gain when the clutterer is closer to sensing targets, i.e., the angle between the clutterer and the closest sensing target reduces to $10.6^\circ$. According to Fig.~\ref{fig_beampattern_vs_iteration_rounds_close_clutter}, the sensing performance constraints are still satisfied, indicating that the proposed algorithm can still ensure sensing performance. Moreover, by comparing Fig.~\ref{fig_beampattern_vs_iteration_rounds_close_clutter} against Fig.~\ref{fig_beampattern_vs_iteration_rounds}, we observe that the sidelobe suppression achieved the proposed mutual coupling aware algorithm is more evident when the clutterer is closer to the sensing target. This highlights the necessity of employing the proposed algorithm in such scenarios.} 

Fig.~\ref{sum_rate_vs_transmit_power} illustrates the impact of maximum allowed sidelobe gain $G_w^{th}$ on the sum rate of the RHS-aided multi-user network. For comparison, two benchmark schemes are also included. In the random algorithm, the holographic pattern at the RHS (i.e., the magnitude of polarizability of the equivalent magnetic dipoles) is randomly assigned, while the digital beamformer is optimized using the same approach as in this paper. In the mutual coupling unaware scheme, mutual coupling is not considered during the joint optimization of the digital beamformer and holographic pattern, which is achieved by setting the coupling matrix $\bm{G} = \bm{0}$. The maximum and minimum polarizability magnitudes are given by $6.02\times10^{-7}$ and $3.21\times10^{-7}$, respectively. Further, the RHS contains $N=36$ elements. {From Fig.~\ref{sum_rate_vs_transmit_power}, we can see that the proposed algorithm outperforms the conventional mutual coupling unaware algorithm in terms of the minimum data rate among all users, highlighting the importance of accounting for mutual coupling effects. Furthermore, the proposed algorithm outperforms the random algorithm, demonstrating the effectiveness of the proposed holographic beamforming method. In addition, we can see that the minimum data rate among users becomes larger when the maximum allowed sidelobe gain $G_w^{th}$ increases, since the increase of the threshold $G_w^{th}$ indicates larger feasibility set. Fig.~\ref{sum_rate_vs_transmit_power} also shows that when the angle between the clutterer and the closest sensing target reduces from $20.2^\circ$ to $10.6^\circ$, the minimum data rate among the users become smaller\footnote{Further analysis shows that when the angle difference between communication users and sensing targets becomes smaller, the communication rates become smaller. This is because for communication users, the beams that serve adjacent communication users or sensing targets can cause interference to them, thus leading to lower communication data rate.}. This is because when the clutterers are closer to the sensing target, the corresponding sensing beam should be more narrowly focused to suppress signal leakage toward the clutterer, thereby meeting the predetermined interference threshold. The resulting tighter beamwidth constraint limits the feasible design space, ultimately leading to degraded optimal objective values, i.e., smaller communication rates.}
\begin{figure}[!t]
	\centering
	\includegraphics[width=0.36\textwidth]{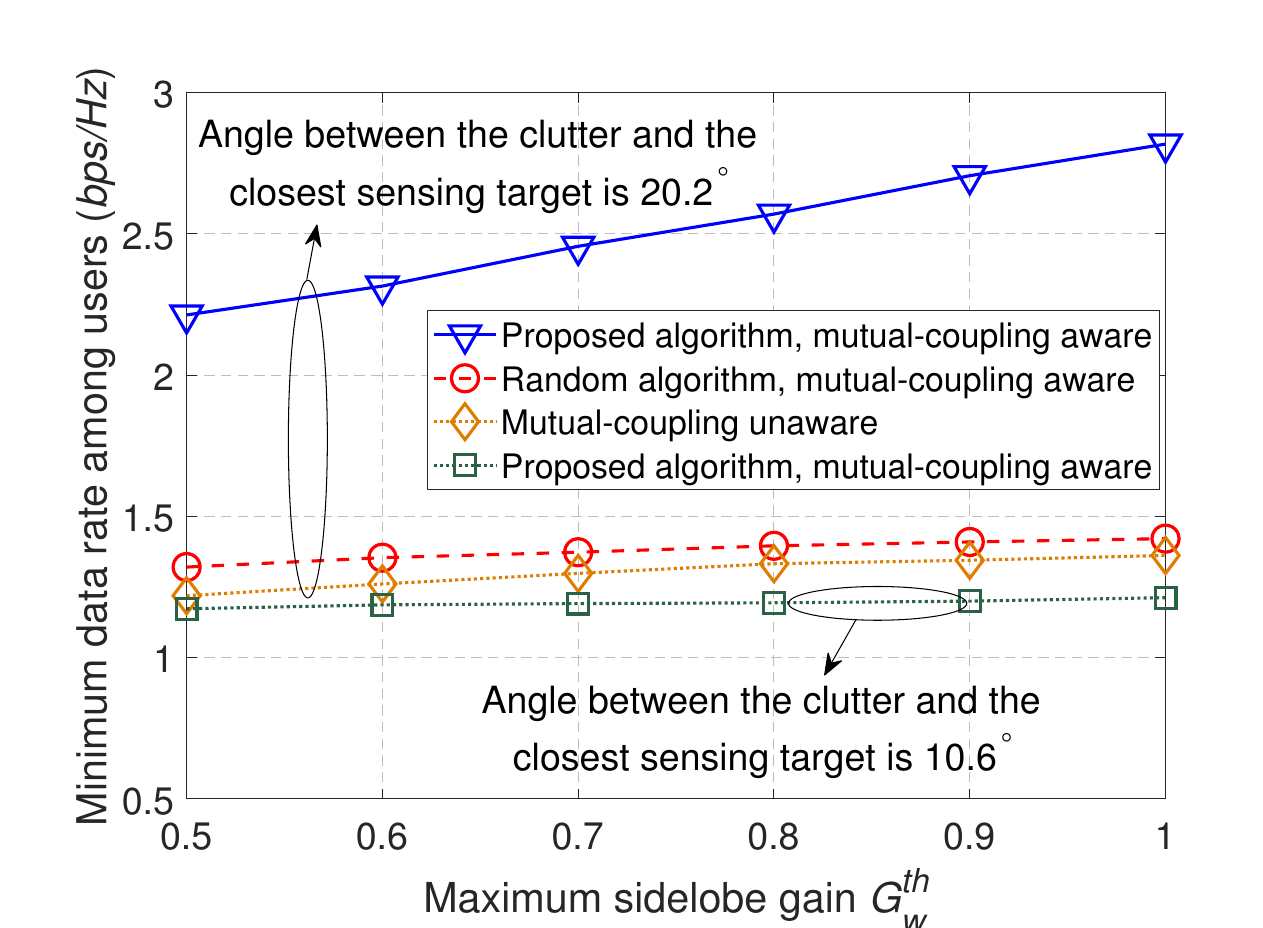}
	\caption{Minimum data rate among users versus maximum allowed sidelobe gain $G_w^{th}$, with the number of RF chains $T=4$, and the threshold for the beampattern gain $G^{th}=20$.}
	\vspace{-5mm}
	\label{sum_rate_vs_transmit_power}
\end{figure}
\begin{figure}[!t]
	\centering
	\includegraphics[width=0.33\textwidth]{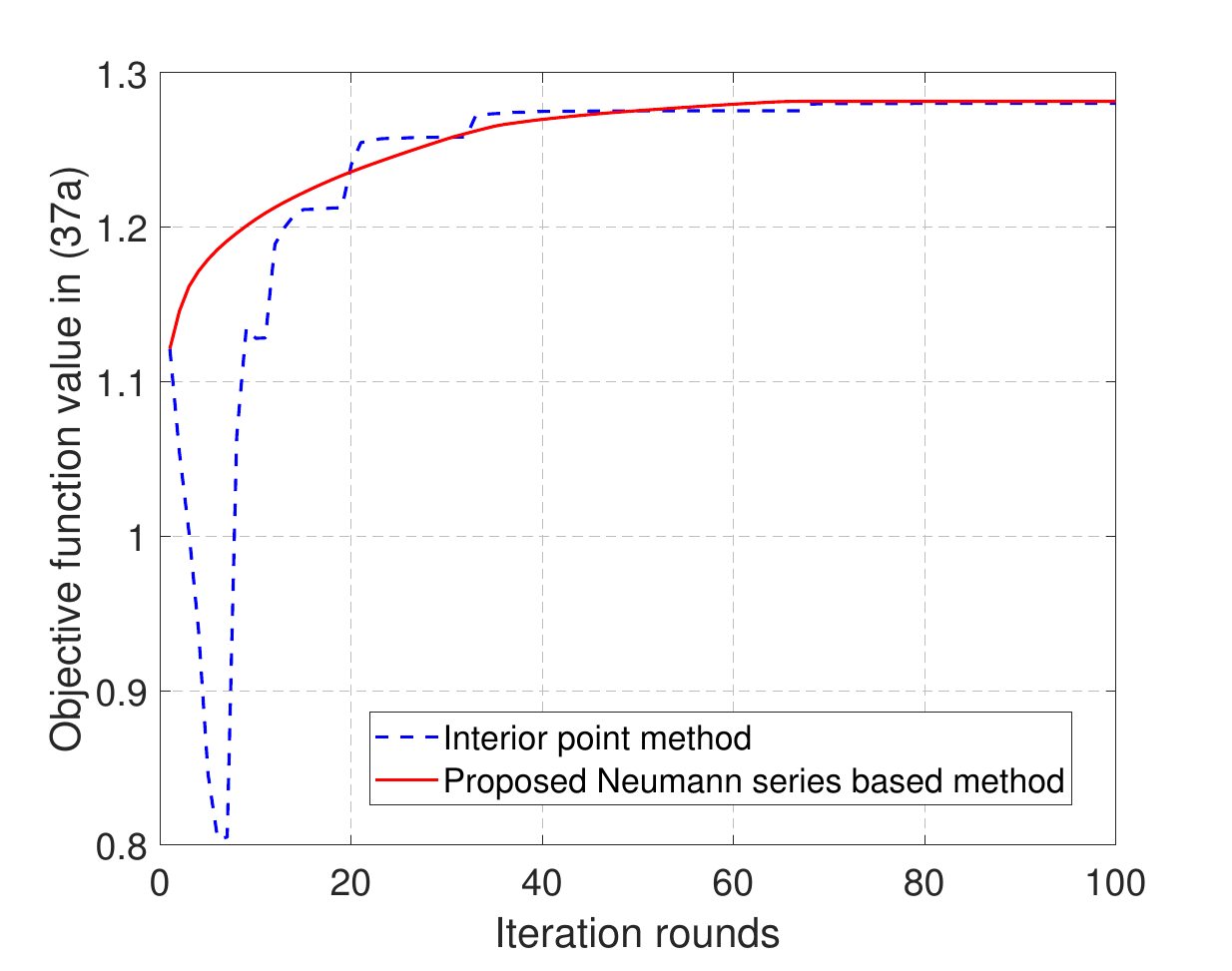}
			\vspace{-.3mm}
	\caption{Objective function value in (37a) vs. iteration rounds. The interior-point based method is implemented by the fmincon function in \textit{Matlab}. We set the number of RF chains $T=4$, the threshold for the beampattern gain in the sensing directions $G^{th}=20$, and that in the clutterer direction $G^{th}_{sb}=1$.}
		\vspace{-4mm}
	\label{fig_convergence_rate}
\end{figure}

{In Fig.~\ref{fig_convergence_rate}, we compare the convergence rate of the proposed method against conventional interior point methods, which are often utilized to deal with constrained optimization problems. Here, the interior-point method is implemented through the fmincon function embedded in \textit{Matlab}. From Fig.~\ref{fig_convergence_rate}, we can see that the convergence rate of the Neumann series based method is comparable to that of existing interior-point based method, and the two algorithms achieve similar objective function values. This indicates that the convergence rate of the proposed method is acceptable. Further, note that by applying the Neumann series approximation, we do not have to perform the inverse matrix operations, and we only need to perform linear matrix operations instead, which is more computationally efficient. This thus motivates us to employ the Neumann series based holographic beamforming method.}



\begin{figure}[!t]
	\centering
	\includegraphics[width=0.36\textwidth]{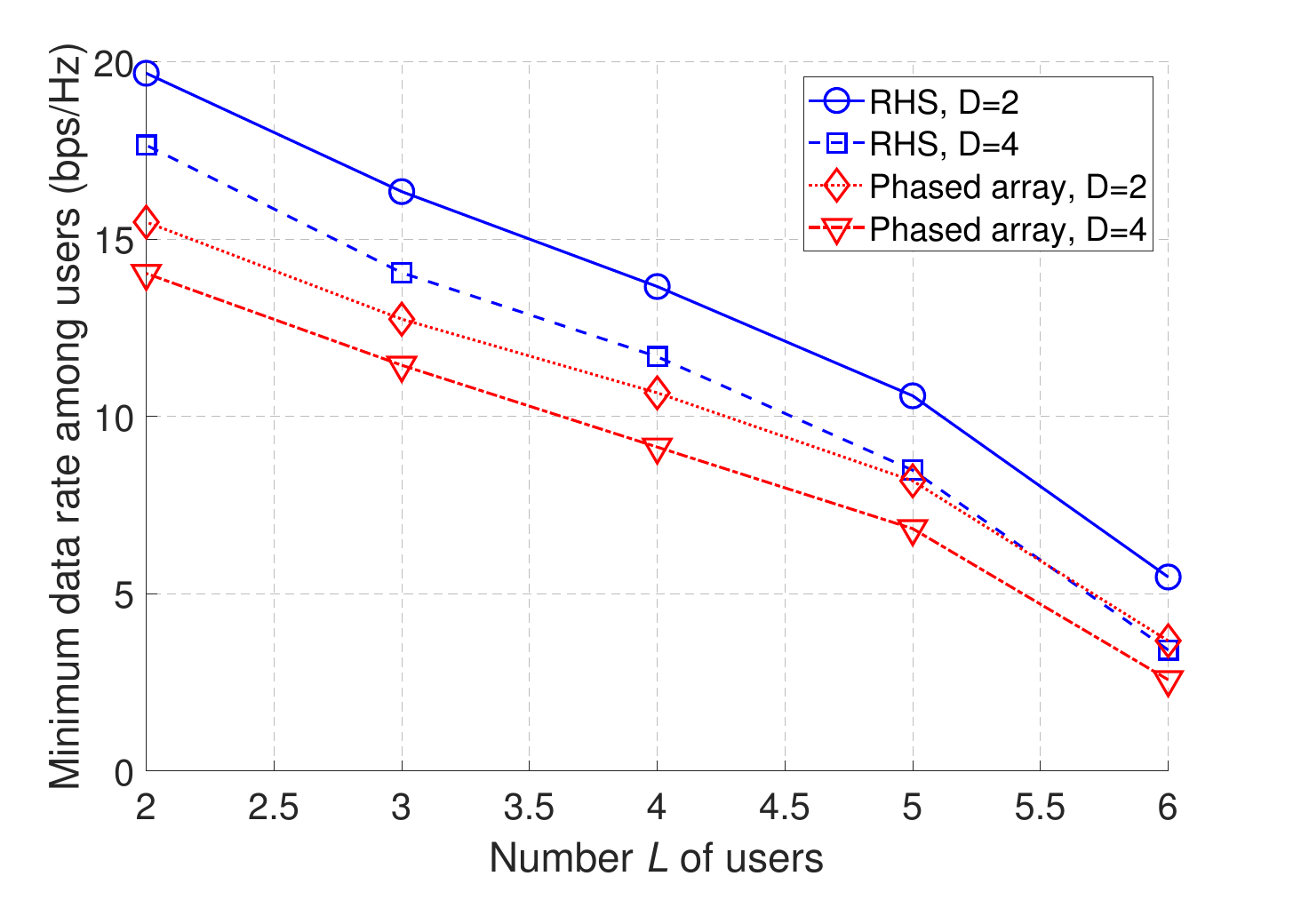}
			\vspace{-1.5mm}
	\caption{Minimum data rate among users versus number of users $L$, under different numbers $D$ of sensing directions. The hardware cost of the RHS is the same as that of the phased array~\cite{Hu_holographic_LEO_2023}. The BS contains $T=6$ RF chains.}
	\vspace{-5mm}
	\label{sum_rate_vs_feed_number}
\end{figure}

Fig.~\ref{sum_rate_vs_feed_number} compares the performance of an ISAC system employing an RHS with that using a conventional phased array given their hardware costs. We assume that the hardware cost of the RHS is the same as that of the phased array. From Fig.~\ref{sum_rate_vs_feed_number}, we observe that despite the presence of the mutual coupling effect, the RHS still achieves a higher sum rate than conventional phased arrays, owing to its larger array aperture. Furthermore, the minimum data rate among the users decrease with the number $L$ of users due to reduced transmit power allocated to each user. Further, we can see that as the number $D$ of sensing directions becomes larger, the data rate also degrades. This is because according to (\ref{cons_fair_sensing}) and (\ref{cons_min_sensing}), a larger $D$ indicates more constraints to the holographic beamforming problem, leading to fewer feasible solutions and thus lower optimal value (i.e., lower data rate). 

\vspace{-.3cm}
\section{Conclusion}
\vspace{-.1cm}
\label{sec_conclusion}
In this paper, we have studied an RHS-enabled holographic ISAC system in clutter environments, explicitly accounting for the mutual coupling among RHS elements using coupled dipole approximations. To suppress sidelobes and enhance ISAC performance, we have formulated a holographic ISAC problem by jointly optimizing the digital beamforming at the BS and the holographic beamforming at the RHS, subject to maximum allowed beamforming gain in the direction of clutterers. To address this non-convex optimization problem, we have decomposed it into two subproblems, and have proposed a mutual coupling aware joint optimization algorithm to solve them alternatively. Based on our analysis and numerical results, the following conclusions can be drawn:
\begin{itemize}
	\item When the mutual coupling effects are not considered in holographic beamforming, the RHS array at ISAC BS can generate notable undesired sidelobes, since the coupled fields within the RHS can disrupt reference wave distributions. 
	\item With the proposed mutual coupling aware beamforming algorithm, the sidelobe of the RHS can be effectively suppressed, thus reducing echo signals from environmental clutterers and improving sensing performance.
	
	
	\item When the mutual coupling effects within RHS arrays are considered, ISAC systems employing an RHS can still outperform conventional phased-array-based ISAC systems given the same hardware cost.
\end{itemize}


\begin{appendices}
	
\vspace{-.2cm}
	\section{Proof of Remark~\ref{remark_eff}}
	\label{app_eff}
Consider the special case of an ideal RHS without mutual coupling effects, where the transmitted signals of the RHS elements will not propagate to the other RHS elements and interact with them. Therefore, all entries within the coupling matrix $\bm{G}$ are equal to zero, i.e. $\bm{G}=\bm{0}$. 
Then, the holographic beamformer in (\ref{Holographic_BF}) can be rewritten as $\bm{B}=ke^{j\tau}\bm{\Theta}\widetilde{\bm{F}}^{ref}$, which is found to be consistent with that intended for an ideal RHS without mutual coupling (i.e., (\ref{ideal_Holographic_BF})). This demonstrates the generalizability of the proposed holographic beamformer.

\vspace{-.2cm}

{
\section{RHS radiation pattern in relation to reference wave and geometric channel model}
\label{app_radiation_pattern_analysis}
Here, we show how the reference wave and the geometric channel model are linked to the radiation pattern of the RHS. 

Specifically, the radiation pattern of the RHS corresponding to the $t$-th feed of the RHS, i.e., the radiation pattern when transmit signals are injected only to the $t$-th feed while leaving the other feeds idle, can be given by $P_t(\theta, \phi) = \mathbb{E} \left(z_t(\theta, \phi)z_t^*(\theta, \phi) \right) = \bm{a}^T(\theta, \phi) \bm{b}_t \bm{b}_t^H \bm{a}^*(\theta,\phi)$.
Here, $z_t(\theta, \phi)$ is the far-field signal towards the direction $(\theta,\varphi)$, $\bm{b}_t$ is the $t$-th column of the holographic beamformer $\bm{B}$. By substituting the expression of the holographic beamformer in (4) of the paper into the expression of the radiation pattern, we can derive a closed-form relationship between the radiation pattern and the reference wave, as follows:
\begin{align}
	\label{radiation_pattern_RHS_vs_ref_wave_v2}
	P_t(\theta, \phi) =& |k|^2 \bm{a}^T(\theta, \phi) \left((e^{j\tau}\bm{\Theta})^{-1}-\bm{G}\right)^{-1}\widetilde{\bm{f}}_t^{ref} (\widetilde{\bm{f}}^{ref}_t)^H\notag\\ &\left(\left((e^{j\tau}\bm{\Theta})^{-1}-\bm{G}\right)^{-1}\right)^H\bm{a}^*(\theta,\phi).
\end{align}
Here, $\widetilde{\bm{f}}_{t}^{ref}$ represents the reference wave distribution generated by the $t$-th feed of the RHS.

Then, we show how the radiation pattern influences the geometric model for the equivalent channel from the RF chains of the BS to the UEs. Specifically, assume the multipath channel from the RHS to the $l$-th user consists of $P_l$ paths. Denote the angle-of-departure corresponding to the $p$-th path as $(\theta_p^{l},\varphi_p^{l})$. Further, define $\alpha_p^l$ as the complex coefficient corresponding to the $p$-th path. Then, the geometric model for the equivalent channel $\bm{h}^{(l)}\in\mathbb{C}^{T\times 1}$ from the $T$ RF chains of the BS to the $l$-th UE can be written as
\begin{align}
	\label{geo_channel_v2}
\bm{h}^{(l)}\!=\!(\sum_{p=1}^{P_l}\!\alpha_p^{(l)}\bm{a}^T(\theta_p^{l},\varphi_p^{l})\bm{B})^T\!=\!\sum_{p=1}^{P_l}\!\alpha_p^{(l)}\bm{B}^T\bm{a}(\theta_p^{l},\varphi_p^{l}).
\end{align}
Further, we have
\begin{align}
	\label{radiation_pattern_v2}
	&\bm{B}^T\bm{a}(\theta_p^{l},\varphi_p^{l})=[\bm{a}^T(\theta_p^{l},\varphi_p^{l})\bm{b}_1,\dots,\bm{a}^T(\theta_p^{l},\varphi_p^{l})\bm{b}_T]^T,\notag\\
	&=[\sqrt{P_1(\theta_p^{l},\varphi_p^{l})}e^{j\phi_{1,p}^{(l)}},\dots,\sqrt{P_T(\theta_p^{l},\varphi_p^{l})}e^{j\phi_{T,p}^{(l)}}]^T.
\end{align}
where $\bm{b}_t$ is the $t$-th column of the holographic beamformer $\bm{B}$, $P_t(\theta_p^{l},\varphi_p^{l})$ is the radiation pattern in the direction $(\theta_p^{l},\varphi_p^{l})$ corresponding to the $t$-th feed of the RHS, and $\phi_{t,p}^{(l)}$ is the phase of term $\bm{a}^T(\theta_p^{l},\varphi_p^{l})\bm{b}_T$. By substituting (\ref{radiation_pattern_v2}) into (\ref{geo_channel_v2}), we have a closed-form relationship between the geometric channel and the radiation of the RHS, i.e.,
\begin{align}
	\label{geo_channel_link_v2}
	\bm{h}^{(l)}=[&\sum_{p=1}^{P_l}\alpha_p^{(l)}\sqrt{P_1(\theta_p^{l},\varphi_p^{l})}e^{j\phi_{1,p}^{(l)}},\notag\\
	&\dots,\sum_{p=1}^{P_l}\alpha_p^{(l)}\sqrt{P_T(\theta_p^{l},\varphi_p^{l})}e^{j\phi_{T,p}^{(l)}}]^T.
\end{align}
}
\vspace{-7mm}
\section{Proof of Lemma~\ref{lemma_approx_HB}}
\label{app_approx_HB}
According to (\ref{B_k}), we can rewrite $\bm{B}_k$ as $\bm{B}_k=k(e^{-j\tau}\bm{\Theta}_{k-1}^{-1}-\bm{G}-e^{-j\tau}\delta_k\widetilde{\bm{\Theta}})^{-1}\widetilde{\bm{F}}^{ref}$.
Recall that $\bm{S}_k=e^{-j\tau}((e^{j\tau}\bm{\Theta}_{k-1})^{-1}-\bm{G})^{-1}$. Therefore, we have 
\begin{align}
	\bm{B}_k&=k\big((e^{-j\tau}\bm{\Theta}_{k-1}^{-1}-\bm{G})(\bm{I}-\delta_k\bm{S}_k\widetilde{\bm{\Theta}})\big)^{-1}\widetilde{\bm{F}}^{ref},\notag\\
	\label{B_k_4}
	&=k(\bm{I}-\delta_k\bm{S}_k\widetilde{\bm{\Theta}})^{-1}(e^{-j\tau}\bm{\Theta}_{k-1}^{-1}-\bm{G})^{-1}\widetilde{\bm{F}}^{ref}.
\end{align} 
Then, based on Neumann series, the expression $(\bm{I}-\delta_k\bm{S}_k\widetilde{\bm{\Theta}})^{-1}$ included in (\ref{B_k_4}) can be rewritten as
\begin{align}
	\label{neumman_2}
	(\bm{I}-\delta_k\bm{S}_k\widetilde{\bm{\Theta}})^{-1}=\bm{I}+\sum_{i=1}^{\infty}(\delta_k\bm{S}_k\widetilde{\bm{\Theta}})^i.
\end{align}
Note that $\delta_k$ takes a small value. Therefore, we can use the first two terms in the right-hand-side of (\ref{neumman_2}) to approximate the expression in its left-hand-side, i.e.,
\begin{align}
	\label{neumman_3}
	(\bm{I}-\delta_k\bm{S}_k\widetilde{\bm{\Theta}})^{-1}\approx\bm{I}+\delta_k\bm{S}_k\widetilde{\bm{\Theta}}.
\end{align}
By substituting (\ref{neumman_3}) into (\ref{B_k_4}), we can derive the approximation for the holographic beamformer $\bm{B}_k$ given in Lemma~\ref{lemma_approx_HB}.

Then, we show when the approximation in (\ref{neumman_3}) is valid. Define $\bm{\Gamma}$ and $\hat{\bm{\Gamma}}$ as the left-hand-side and right-hand-side of (\ref{neumman_3}), respectively. The approximation in (\ref{neumman_3}) is valid when
\begin{align}
	\label{condition_1}
	\|\bm{\Gamma}-\hat{\bm{\Gamma}}\|\ll 1.
\end{align} 
Note that $\|\bm{\Gamma}-\hat{\bm{\Gamma}}\|\le \frac{\|\delta_k\bm{S}_k\widetilde{\bm{\Theta}}\|^2}{1-\|\delta_k\bm{S}_k\widetilde{\bm{\Theta}}\|}\le \|\delta_k\bm{S}_k\widetilde{\bm{\Theta}}\|^2$.
Therefore, to guarantee that (\ref{condition_1}) holds, we require that
\begin{align}
	\label{condition_2}
	\|\delta_k\bm{S}_k\widetilde{\bm{\Theta}}\|\ll 1.
\end{align}
Further, note that $\|\delta_k\bm{S}_k\widetilde{\bm{\Theta}}\|\le \|\bm{S}_k\|\|\delta_k\widetilde{\bm{\Theta}}\|\le \delta_k \|\bm{S}_k\|$.
Therefore, to satisfy (\ref{condition_2}), we require that $\delta_k\ll \|\bm{S}_k\|^{-1}$.

	\vspace{-.3cm}
	\section{Definitions of Parameters Mentioned in Theorem~\ref{theorem_equivalent_HBF}}
	\label{appendix_equivalent_HBF}
	In the following, we present the definitions of constant matrices and scalars mentioned in (\ref{opt_v4_problem_HBF}). 
	\begin{align}
		\bm{C}_n&=\begin{bmatrix}
			\bm{0}_{N\times N} & \frac{1}{2}\bm{1}_n\\
			\frac{1}{2}\bm{1}_n^T & 0
		\end{bmatrix},\hfill\\
		\hat{\bm{U}}_{k,l}&=\begin{bmatrix}
			|\rho_l|^2\bm{U}_{k,l}-|\rho_l|^2\bm{U}_{k} & 2\Re(\bm{f}_{k,l})\\
			2\Re(\bm{f}_{k,l})^T & 0
		\end{bmatrix},\hfill\\
		\widetilde{\bm{U}}_{k,w}&=\begin{bmatrix}
			\bm{U}_{k,w} & \Re(\bm{c}_{k,w})\\
			\Re(\bm{c}_{k,w})^T & 0
		\end{bmatrix},\hfill\\
		\widetilde{\bm{U}}_{k,d}&=\begin{bmatrix}
			\bm{U}_{k,d} & \Re(\bm{c}_{k,d})\\
			\Re(\bm{c}_{k,d})^T & 0
		\end{bmatrix},\hfill\\
		\widetilde{\bm{U}}_{k}&=\begin{bmatrix}
			\bm{U}_{k} & \Re(\bm{c}_{k})\\
			\Re(\bm{c}_{k})^T & 0
		\end{bmatrix},\\
		\widetilde{\bm{U}}_{k,l}&=\begin{bmatrix}
			\bm{U}_{k,l} & \Re(\bm{c}_{k,l})\\
			\Re(\bm{c}_{k,l})^T & 0
		\end{bmatrix},\\
		\bm{U}_{k,l}&=(\bm{S}_k\widetilde{\bm{F}}_{ref}\bm{Q}_l\widetilde{\bm{F}}_{ref}^H\bm{S}_k^H)\circ(\delta_k^2|k|^2\bm{S}_k^H\bm{h}_l^*\bm{h}_l^T\bm{S}_k)^T,\\
		\bm{U}_{k}&=(\bm{S}_k\widetilde{\bm{F}}_{ref}\bm{Q}\widetilde{\bm{F}}_{ref}^H\bm{S}_k^H)\circ(\delta_k^2|k|^2\bm{S}_k^H\bm{h}_l^*\bm{h}_l^T\bm{S}_k)^T,\\
		\bm{U}_{k,w}&=(\bm{S}_k\widetilde{\bm{F}}_{ref}\bm{Q}\widetilde{\bm{F}}_{ref}^H\bm{S}_k^H)\notag\\
		&\quad\circ(\delta_k^2|k|^2\bm{S}_k^H\bm{a}^*(\theta_w,\phi_w)\bm{a}^T(\theta_w,\phi_w)\bm{S}_k)^T,\\
		\bm{U}_{k,d}&=(\bm{S}_k\widetilde{\bm{F}}_{ref}\bm{Q}\widetilde{\bm{F}}_{ref}^H\bm{S}_k^H)\notag\\
		&\quad\circ(\delta_k^2|k|^2\bm{S}_k^H\bm{a}^*(\theta_d,\phi_d)\bm{a}^T(\theta_d,\phi_d)\bm{S}_k)^T,\\
		\bm{d}_{k,l}&=(\rho_lke^{j\tau}\delta_k)^*\bm{v}_{l,c}^H\widetilde{\bm{F}}_{ref}^H\bm{S}_k^H\circ (\bm{h}_l^H\bm{S}_k^H),\\
		\bm{c}_{k,l}&=(ke^{j\tau}\delta_k\bm{h}_l^T\bm{S}_k)^T\circ(k^*e^{-j\tau}\bm{S}_k\widetilde{\bm{F}}_{ref}\bm{Q}_l\widetilde{\bm{F}}_{ref}^H\bm{S}_k^H\bm{h}_l^*)\\
		\bm{c}_{k}&=(ke^{j\tau}\delta_k\bm{h}_l^T\bm{S}_k)^T\circ(k^*e^{-j\tau}\bm{S}_k\widetilde{\bm{F}}_{ref}\bm{Q}\widetilde{\bm{F}}_{ref}^H\bm{S}_k^H\bm{h}_l^*)\\
		e_{k,l}&=(\rho_lke^{j\tau})^*\bm{v}_{l,c}^H\widetilde{\bm{F}}_{ref}^H\bm{S}_k^H\bm{h}_l^*,\\
		b_k&=|k|^2\bm{h}_l^T\bm{S}_k\widetilde{\bm{F}}_{ref}\bm{Q}\widetilde{\bm{F}}_{ref}^H\bm{S}_k^H\bm{h}_l^*,\\
		b_{k,l}&=|k|^2\bm{h}_l^T\bm{S}_k\widetilde{\bm{F}}_{ref}\bm{Q}_l\widetilde{\bm{F}}_{ref}^H\bm{S}_k^H\bm{h}_l^*,\\
		b_{k,w}&=|k|^2\bm{a}^T(\theta_w,\phi_w)\bm{S}_k\widetilde{\bm{F}}_{ref}\bm{Q}\widetilde{\bm{F}}_{ref}^H\bm{S}_k^H\bm{a}^*(\theta_w,\phi_w),\\
		b_{k,d}&=|k|^2\bm{a}^T(\theta_d,\phi_d)\bm{S}_k\widetilde{\bm{F}}_{ref}\bm{Q}\widetilde{\bm{F}}_{ref}^H\bm{S}_k^H\bm{a}^*(\theta_d,\phi_d),\\
		\bm{f}_{k,l}&=\bm{d}_{k,l}-|\rho_l|^2\bm{c}_k+|\rho_l|^2\bm{c}_{k,l}.		
	\end{align}
	Here, $\circ$ represents the element-wise multiplication of two matrices, and $\bm{1}_n\in\mathbb{R}^{N\times 1}$ is a vector consisting of $1$ in the $n$-th element and $0$ for all the other elements.
	
\end{appendices}

\vspace{-.4cm}

\end{document}